\newtheorem{theorem}{Theorem}
\newtheorem{lemma}{Lemma}
\newtheorem{corollary}{Corollary}
\newtheorem{definition}{Definition}
\newtheorem{example}{Example}
\newtheorem{property}{Property}
\newcommand{\bs}[1]{\boldsymbol{#1}}
\newcommand{\E}[1]{\mathbb{E}\left[#1\right]}
\def \cO {\mathcal{O}}
\def \cX {\mathcal{X}}
\def \cY {\mathcal{Y}}
\def \bF {\mathbb{F}}
\def \cQ {\mathcal{Q}}
\def \cR {\mathcal{R}}
\def \cA {\mathcal{A}}
\def \bP {\mathbb{P}}
\def \cB {\mathcal{B}}
\def \cG {\mathcal{G}}
\def \cF {\mathcal{F}}
\begin{document}

\title{Finite-Level Quantization Procedures for Construction and Decoding of Polar Codes}
\author{\authorblockN{Yunus Inan and Emre Telatar}
	\authorblockA{EPFL, Lausanne, Switzerland}{Email: \{yunus.inan,emre.telatar\}}@epfl.ch}
 
 \maketitle 
\begin{abstract}
	We consider finite-level, symmetric quantization procedures for construction and decoding of polar codes. Whether polarization occurs in the presence of quantization is not known in general. In \cite{Urbanke}, it is shown that a simple three-level quantization procedure polarizes and a calculation method is proposed to obtain a lower bound for achievable rates. We find an improved calculation method for achievable rates and also the exact asymptotic behavior of the block error probability under the aforementioned simple case. We then prove that certain $D$-level quantization schemes polarize and we give a lower bound on achievable rates. Furthermore, we show that a broad class of quantization procedures result in a weaker form of the polarization phenomenon.
\end{abstract}

\section{Introduction}
\vspace{0.2cm}

Polar codes are the first class of channel codes that achieve capacity for Binary-input Memoryless Symmetric (BMS) channels with low encoding and decoding complexities \cite{Arikan_polarization}. As the name suggests, polar codes are based on a polarization phenomenon, which we now describe briefly: Given two identical and independent instances of a BMS channel $W: \bF_2 = \cX \to \cY$, create two synthetic channels $W^-: \cX \to \cY^2$ and $W^+:\cX \to \cY^2 \times \cX$ with the polar transform introduced in \cite{Arikan_polarization}. Arikan has shown that the mutual information of $W^+$ is greater than the mutual information of $W^-$ and their average is equal to that of $W$. This means that from a BMS channel $W$, its `worse' and `better' versions are synthesized while the average mutual information is preserved. Recursive application of the above construction allows one to synthesize channels $W^{\bs{s}_n}$ for all $\bs{s}_n \in \{+,-\}^n$ in $n$ steps. Arikan has also shown that a fraction of synthetic channels eventually become `perfect' whereas the other fraction eventually become `useless'. In other words, they eventually polarize. Together with the fact that the average mutual information remains same at each step and the error probability of perfect channels behave as $O(2^{-2^{n/2}})$ (cf. \cite{Telatar_Rate}), this shows the capacity achieving property of polar codes.

 Arikan has introduced the Successive Cancellation Decoder (SCD) in \cite{Arikan_polarization}, which estimates the channel input sequence by calculating the individual log-likelihood ratios (LLR) for each bit, exploiting the recursive structure. The basis of code construction is to send the information bits through synthetic channels that are close to perfect. Identifying these almost perfect channels can in principle be done with a density evolution algorithm \cite{MoriTanaka}. We exploit the inherent symmetry of BMS channels and assume all-zero sequence is sent throughout this manuscript. Under this assumption and supposing that the random channel output is $Y$, the update equations for LLRs are given by
\begin{equation}\label{eq:minus_llr}
L^- = L \boxplus L', \qquad L^+ = L + L'
\end{equation}
where $L\triangleq \ln\left(\frac{W(Y|0)}{W(Y|1)}\right)$, $a\boxplus b \triangleq \ln\left(\frac{e^{a+b}+1}{e^a + e^b}\right)$ and $L'$ is an identical and independent copy of $L$. Similar to the creation of synthetic channels, one can calculate the distribution of any $L^{\bs{s}_n}$, $\bs{s}_n \in \{+,-\}^n$. Note that the distribution of $L^{\bs{s}_n}$ is equivalent to the channel transition probabilities of $W^{\bs{s}_n}$ given all-zero input.

Now, we state two challenges about code construction and decoder implementation: 
\begin{enumerate}
	\item In general, equations \eqref{eq:minus_llr} suggest that the support size of LLRs grow exponentially in block length. To overcome this problem, special degradation procedures or approximations are proposed (e.g., see \cite{Tal_Vardy}, \cite{RamtinHassani}).
	\item LLRs are real numbers, therefore implementation of a real-time SCD has to include an inherent quantization scheme depending on the required precision (c.f. \cite{LLRBasedCircuit}). In \cite{Urbanke}, robustness of polarization with respect to a specific family of quantization schemes was examined and the authors have shown that even a simple 3-level quantization scheme polarizes.
\end{enumerate}

We refer the reader to the partial list (\!\!\cite{Trifonov,MoriTanaka2,UniformQuantization,Joachim,HigherOrder}) for other studies on these considerations. To the best of our knowledge, little is known about polarization phenomenon for finite-level quantization schemes other than a specific three-level case. We have found that a weaker polarization phenomenon compared to that in \cite{Arikan_polarization} exists under some constraints. 

The main results of this manuscript are:

\begin{itemize}
	\item[(i)] For the three-level quantization scheme in \cite{Urbanke}, an improved calculation method for the lower bound for achievable rates is obtained.
	\item[(ii)] The exact asymptotic behavior of block error probability for the same three-quantized decoder is found to be $\cO(2^{-\sqrt{N}^{\log \phi}})$, where $\phi = \frac{1+\sqrt{5}}{2}$ is the golden ratio and $N = 2^n$ is the block length.
	\item[(iii)] A broad family of finite-level quantization procedures weakly polarize. The family is to be defined in Section \ref{sec:Qfamily}.
	\end{itemize}

\section{Notation}

The random variables are denoted with uppercase letters whereas their realizations are denoted with lowercase letters (e.g., $X_n$ and $x_n$). Sets and events are denoted with script-style letters (e.g., $\cA_n$, $\cG_n$). As two special cases, the set $\{1,2,\ldots,n\}$ is denoted $[n]$, $n\in \mathbb{N}$ and $\Pi_{\mathbb{R}}$ denotes the set of all probability distributions on $\mathbb{R}$.  $|\cA|$ denotes the cardinality of a set $\cA$. Vectors and sequences are denoted by boldface letters. If their length is known, it is added as a subscript (e.g., $\bs{s}_n$). If the length is not known or has no importance, we drop the subscript (e.g., $\bs{s}$). $\mathbbm1_{\cA}$ denotes the indicator function for a set $\cA$.

We abbreviate the following operations: $a \wedge b \triangleq \min\{a,b\}$, $a \vee b \triangleq \max\{a,b\}$, $\text{sign}(x) \triangleq \mathbbm{1}_{\{x > 0\}} - \mathbbm{1}_{\{x < 0\}}$. $h(x) \triangleq -x\log x - (1-x)\log(1-x)$ is the binary entropy function defined for $x \in [0,1]$. All the logarithms are in base 2 unless we use the notation $\ln$ for natural logarithm.

\section{Static and Dynamic Quantization Procedures}\label{sec:Qfamily}
We define a family of symmetric quantization procedures to unify the approaches in \cite{Urbanke},\cite{Tal_Vardy} and \cite{RamtinHassani}.
\begin{definition}[$D$-quantization family and admissible quantization procedures] \label{def:qfamily} For a finite $D \in \mathbb{N}$, a $D$-quantization family $\cQ^{(D)}$ is a family of odd, increasing step functions which can take at most $D$ values. Moreover, the members are right continuous on $\mathbb{R}_+$, and left continuous on $\mathbb{R}_-$. We also define the family of admissible quantization procedures as $\cQ \triangleq \cup_{D \geq 1} \cQ^{(D)}$.
\end{definition}
Restriction to odd functions provides symmetry. This is necessary to preserve the property that the set of BMS channels are invariant under polar transforms with quantization schemes.

Note that Definition \ref{def:qfamily} implies that for all $Q \in \cQ$, $Q(0)=0$. Hence, one can always take $D$ as an odd number. Furthermore, for any member of $\cQ$; the quantization intervals in $\mathbb{R}_+$ together with their images contain all the information needed for its behavior in $\mathbb{R}$. Taking into account the above, we have the following definition of static and dynamic quantization procedures.
\begin{definition}[$D$-static and $D$-dynamic quantization]\label{def:quantization}
	A $D$-dynamic quantization $Q_{\beta}^{(D)}:\Pi_{\mathbb{R}}\times\mathbb{R} \to \mathbb{R}$ is a member of $\cQ^{(D)}$, where the right limits of quantization intervals in $\mathbb{R}_+$ and their images are described in parameter $\beta(\bP)$, $\bP \in \Pi_{\mathbb{R}}$. $\beta(\bP)$ is a set of 2-tuples with $|\beta| = \frac{D-1}{2}$ and depends on the distribution $\bP$. A $D$-static quantization is a $D$-dynamic quantization with $\beta$ being same for all $\bP \in \Pi_{\mathbb{R}}$.
\end{definition}

We give a simple example of a $D$-static quantization procedure.
\begin{example}
	Given $\alpha_1, \alpha_2, \gamma_1, \gamma_2 \in \mathbb{R},~0 < \alpha_1 < \alpha_2$ and $0 < \gamma_1 < \gamma_2$, let $\beta = \{(\alpha_1,\gamma_1),(\alpha_2,\gamma_2)\}$. $Q_\beta^{(5)}(x)$ is depicted in Figure \ref{fig:Q}:
\begin{figure}[h!]\label{fig:Q}\centering
	
	\begin{tikzpicture}[scale=0.82, every node/.style={scale=0.85}]
	\begin{axis}[
	axis x line=center,
	axis y line=center,
	x axis line style={<->},
	y axis line style={<->},
	xtick={-0.7,-0.2,0,0.2,0.7},
	ytick={-1,-0.6,0,0.6,1},
	xmajorgrids,
	ymajorgrids,
	yticklabels={,,},
	xticklabels={,,},
	ymin=-2,ymax=2,
	]
	\addplot[mark = * ,blue,line width = 0.8pt, mark options={fill=white}] coordinates {
		(-0.7,-1)(-0.7,-0.6) (-0.2,-0.6) (-0.2,0) (0.2,0) (0.2,0.6) (0.7,0.6)(0.7,1)
	};
\addplot[mark = none,blue,line width = 0.8pt] coordinates {
	(0.7,1) (1,1)
};
\addplot[mark = none,blue,line width = 0.8pt] coordinates {
	(-0.7,-1) (-1,-1)
};
\addplot[only marks, mark = *, blue] coordinates {
	(-0.7,-1) (-0.2,-0.6)  (0.2,0.6) (0.7,1)
};

\node at (axis cs: 0.7,-0.2) {$\alpha_2$};
\node at (axis cs: 0.2,-0.2) {$\alpha_1$};
\node at (axis cs: -0.7,0.2) {$-\alpha_2$};
\node at (axis cs: -0.2,0.2) {$-\alpha_1$};
\node at (axis cs: -0.1,0.6) {$\gamma_1$};
\node at (axis cs: -0.1,1) {$\gamma_2$};
\node at (axis cs: 0.1,-0.6) {$-\gamma_1$};
\node at (axis cs: 0.1,-1) {$-\gamma_2$};
\node at (axis cs: 0.17,1.8) {$Q_\beta^{(5)}(x)$};
\node at (axis cs: 0.94,-0.1) {$x$};
	\end{axis}

	\end{tikzpicture}
	\caption{Graphical representation of $Q_\beta^{(5)}(x)$.}
\end{figure}
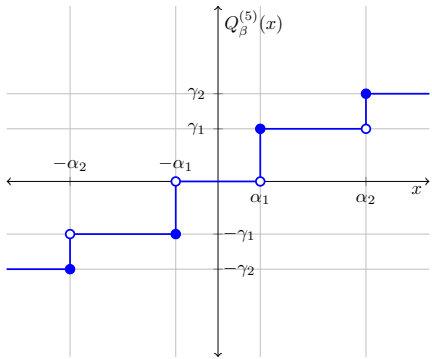

A special case is when $\alpha_1 = 0$. Then, $Q_\beta(0) = 0$ and $Q_\beta(x) = \gamma_1$ for $0<x<\alpha_2$. Observe that $Q_\beta$ is not continuous at zero for this case.
\end{example}

We sometimes drop the superscript $(D)$ if the number of quantization levels $D$ is known or trivial. For dynamic quantization procedures, the notation $\beta(Y)$ is equivalent to $\beta(\bP)$ if a random variable $Y$ with distribution $\bP$ is to be quantized.

$\cQ$ contains a broad class of practical quantization procedures. Observe that any quantization scheme similar to those in \cite{Urbanke} belongs to $\cQ^{(D)}$. Furthermore, it is immediate from Definition \ref{def:quantization} that $Q \circ Q' \in \cQ$ for all $Q$, $Q'\in \cQ$. This implies that the greedy quantization procedures in \cite{Tal_Vardy} and \cite{RamtinHassani} are dynamic quantization procedures which belong to $\cQ$ with the additional condition that zero is an absorbing support, namely, any combination of the zero support with some nonzero support should map to zero. We also emphasize that the widely used approximation (c.f. \cite{HardwarePolar})
$$
a\,  \widehat \boxplus\, b \triangleq (|a|\wedge|b|)\text{sign}(ab) \approx a \boxplus b
$$
results in a dynamic quantization procedure under some conditions.

\begin{lemma}\label{lem:minsum}Consider a discrete random variable $L$ and its identical and independent copy $L'$ that take values in the finite set $\mathcal{L} = \{d_1, \ldots, d_n\}$ for some $n \in \mathbb{N}$. Take the symmetrized set $\tilde{ \mathcal{L}} \triangleq \mathcal{L}\, \cup\, (-\mathcal{L})$, where $-\mathcal{L} = \{-d_1, \ldots, -d_n\}$. Suppose the non-negative elements of $\tilde{ \mathcal{L}}$ are ordered as $\alpha_1\leq\ldots\leq\alpha_m$ for some  $m$. If $\alpha_{i+1} > \ln(e^{\alpha_i} + \sqrt{e^{2\alpha_i}-1})$ for all $1 \leq i \leq m-1$, there exists a dynamic quantization procedure $Q_{\beta(L)}$ such that $L\, \widehat \boxplus\, L' = Q_{\beta(L)}(L \boxplus L')$.
\end{lemma}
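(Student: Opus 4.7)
The plan is to use the required odd symmetry of $Q$ to reduce the problem to non-negative arguments, and then interpret the algebraic hypothesis as a separation condition between the candidate quantization cells on $\mathbb{R}_+$.

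First I would note that $a\boxplus b = \text{sign}(ab)\,(|a|\boxplus|b|)$ and $a\,\widehat{\boxplus}\, b = \text{sign}(ab)\,(|a|\wedge|b|)$, so, since $Q_{\beta(L)}$ is required to be odd, it suffices to construct $Q$ on $\mathbb{R}_+$ satisfying $Q(\alpha_i \boxplus \alpha_j) = \alpha_i$ for all $1 \leq i \leq j \leq m$. A direct computation shows that for $0 \leq a \leq b$ one has $0 \leq a\boxplus b < a$, strictly increasing in $b$; hence as $j$ varies over $\{i,\dots,m\}$ the values $\alpha_i \boxplus \alpha_j$ increase from $\alpha_i\boxplus\alpha_i = \ln\cosh(\alpha_i)$ up to $\alpha_i\boxplus\alpha_m$, and all lie in the interval $\bigl[\ln\cosh(\alpha_i),\,\alpha_i\bigr)$.

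Second, I would show that the hypothesis is exactly the separation of these intervals: moving $e^{\alpha_i}$ to the left in $\alpha_{i+1} > \ln(e^{\alpha_i} + \sqrt{e^{2\alpha_i}-1})$ and squaring yields $e^{2\alpha_{i+1}} - 2 e^{\alpha_i + \alpha_{i+1}} + 1 > 0$, equivalently $\ln\cosh(\alpha_{i+1}) > \alpha_i$. Therefore $\bigl[\ln\cosh(\alpha_i),\,\alpha_i\bigr)$ lies strictly below $\bigl[\ln\cosh(\alpha_{i+1}),\,\alpha_{i+1}\bigr)$ for every $i$, and we may pick thresholds $t_i \in \bigl(\alpha_i,\,\ln\cosh(\alpha_{i+1})\bigr)$ for $i=1,\dots,m-1$.

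Finally, taking $t_0=0$ and $t_m = \infty$, define $Q(x)=\alpha_i$ for $x\in[t_{i-1},t_i)$ and extend $Q$ oddly to $\mathbb{R}_-$. The thresholds and output values depend only on $\mathcal{L}$, i.e.\ on the law of $L$, so this produces a bona fide dynamic quantization $Q_{\beta(L)}\in\cQ$ in the sense of Definition \ref{def:quantization}, and by construction $Q(a\boxplus b)=a\,\widehat{\boxplus}\, b$ for every $(a,b)\in\tilde{\mathcal L}^2$. The principal obstacle is the algebraic equivalence between the stated hypothesis and the cleaner separation $\ln\cosh(\alpha_{i+1})>\alpha_i$; once this is recognized, the construction itself is essentially bookkeeping, with only a minor edge case at $\alpha_1=0$, where $\ln\cosh(\alpha_1)=0$ and $Q(0)=0$ is already forced by oddness.
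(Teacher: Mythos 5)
Your proposal is correct and follows essentially the same route as the paper: both hinge on showing the hypothesis is equivalent to $\alpha_i\boxplus\alpha_i>\alpha_{i-1}$ (i.e.\ $\ln\cosh(\alpha_{i+1})>\alpha_i$), so the bands $[\alpha_i\boxplus\alpha_i,\alpha_i)$ containing the values $\alpha_i\boxplus\alpha_j$, $j\geq i$, are disjoint and ordered, and then one defines the odd quantizer mapping each band to $\alpha_i$ (the paper places thresholds at $\alpha_i\boxplus\alpha_i$, you place them in the gaps — an immaterial difference). Your handling of the $\alpha_1=0$ / $Q(0)=0$ edge case is fine.
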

\begin{proof} The random variable $L \boxplus L'$ takes values in the set $\tilde{ \mathcal{L}}\,\boxplus\, \tilde{ \mathcal{L}} =  \{-(\alpha_m \boxplus \alpha_m), \ldots, \alpha_m \boxplus \alpha_m\}$. Suppose $\alpha_{i+1} > \ln(e^{\alpha_i} + \sqrt{e^{2\alpha_i}-1})$ for all $1 \leq i \leq m-1$, then one can show $\alpha_{i-1} < \alpha_{i} \boxplus \alpha_{i} < \alpha_{i} \boxplus \alpha_{i+1} < \ldots < \alpha_{i} \boxplus \alpha_m < \alpha_{i}$ for all $i \in [m]$. Take the dynamic quantization procedure $Q_{\beta{(L)}}$ with 
	$$
	\beta(L) = \cup_{i = 1}^m\{(\alpha_{i} \boxplus \alpha_{i},\alpha_{i})\}.
	$$
	With the above selection, $Q_{\beta(L)}\bigl(\cup_{j \geq i}^m\{\alpha_{i} \boxplus \alpha_{j}\}\bigr) = \alpha_i = \cup_{j \geq i}^m\{\alpha_{i} \wedge \alpha_{j}\}$ for all $i \in [m]$. In other words, every $\alpha_i \boxplus \alpha_j$ is mapped to $\alpha_i \wedge \alpha_j$. Since this true for all $i,j \in [m]$, $\bigl(L\, \widehat \boxplus\, L'\bigr) \vee 0 = \bigl(Q_{\beta(L)}(L \boxplus L')\bigr) \vee 0$. The proof for the negative support follows similarly.
\end{proof}

Note that the condition in Lemma \ref{lem:minsum} can be met by simply scaling the random variables $L$, $L'$ with a large enough constant.
 
\section{Three-Quantized Case}\label{sec:3quantizedCase}
In this section, we study the same three-level quantization procedure from \cite{Urbanke}. We briefly explain the findings in \cite{Urbanke} with an improvement on calculation of the lower bound for the fraction of perfect channels. We also find the exact asymptotic behavior of the block error probability.

Consider a BMS channel $W$, whose output $Y$ takes values from the set $\{-\lambda,0,\lambda\}$. If the initial channel has support size larger than three, it can be quantized with any desired procedure until we obtain a channel with three outputs.
The static quantization procedure we consider throughout this section is $Q_\beta^{(3)}$, $\beta = \{(0, 1)\}$. Verbally, $Q_\beta$ results in only propagating the signs of the quantized random variables. The quantized channel output, $Y^{\bs{s}_n}= Q_\beta(Y^{\bs{s}_{n-1},s_n})$, $\bs{s}_n \in \{+,-\}^n,$ $n\geq 1$ with $Y^{\bs{s}_{n-1},s_n}$ defined according to \eqref{eq:minus_llr}; has therefore three parameters, namely $p^{\bs{s}_n} \triangleq \Pr(Y^{\bs{s}_n}=1)$, $z^{\bs{s}_n} \triangleq \Pr(Y^{\bs{s}_n}=0)$ and $m^{\bs{s}_n} \triangleq \Pr(Y^{\bs{s}_n}=-1)$. Without loss of generality, we assume $p\geq m$. Otherwise, one can negate the channel output  to fulfil this condition. These parameters completely describe the distribution of $Y^{\bs{s}_n}$. Referring to \eqref{eq:minus_llr}, iterations of $(p,m,z)$ under $Q_\beta$ are given by
\vspace{-0.1cm}
\begin{equation}\label{eqn:updates3}
\vspace{-0.1cm}
\begin{split}
p^{+}= p^2 + 2pz & \qquad p^{-} = p^2 + m^2\\
m^+ = m^2 + 2mz &  \qquad m^-= 2mp\\
z^+  = z^2 + 2mp &  \qquad z^-= 2z-z^2\text.
\end{split}
\end{equation}
These iterations are the same as those in \cite{Urbanke}. It is possible to calculate $(p^{\bs{s}},m^{\bs{s}},z^{\bs{s}})$ for any ${\bs{s}} \in \{+,-\}^*$ with the above transformations. Note that these transformations preserve $p^{\bs{s}} \geq m^{\bs{s}}$.

\subsection{Feasible Region for $Y^{\bs{s}}$}
Our purpose is to track these parameters for the statistic $Y^{\bs{s}}$. At first sight, it may seem that $p^{\bs{s}}$, $m^{\bs{s}}$ and $z^{\bs{s}}$ can take any value in the set $\cR_3 \triangleq \{(p,m,z):p\geq m,\ p+m+z = 1,\ p,m,z \geq 0\}$. However, this is not the case. If it is known that $Y^{\bs{s}}$ has gone through $+$ transformation once, there are some restrictions on the feasible region for its parameters.
\begin{lemma}\label{lem:region}
	Define the limiting curve as the $(p,m)$ pairs with the following parametric equations:
	\begin{equation}
	\begin{split}\label{eqn:limiting_curve}
	p^*(t) &= \sqrt{4t^3-3t^4}\\
	m^*(t)& = 1 - 3t + \frac 3 2 t^2 + \frac {p^*(t)} 2\text, \quad t \in [0,1]\text.
	\end{split}
	\end{equation}
	Let $\cR_3^+ \triangleq  \cR_3 \cap \{0 \leq m \leq m^*(t),\ p = p^*(t),\ \forall t \in [0,1] \}$. Then, for any $\bs{s}_n \in \{+,-\}^n$, $n\geq 1$
	\begin{itemize}
		\item[(i)] It is sufficient that $\bs{s}_n$ contains at least one $(+)$ to ensure that $(p^{\bs{s}_n},m^{\bs{s}_n}) \in \cR_3^+$.
		\item[(ii)] If $(p^{\bs{s}_n},m^{\bs{s}_n}) \in \cR_3^+$, then $(p^{\bs{s}_ns},m^{\bs{s}_ns}) \in \cR_3^+$ for $s \in \{+,-\}$. In words, once $(p^{\bs{s}_n},m^{\bs{s}_n})$ is driven under the limiting curve, it remains there.
		\end{itemize}
\end{lemma}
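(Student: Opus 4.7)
Plan: My approach is to reduce both parts to one core fact---that the $+$-map sends $\cR_3$ into $\cR_3^+$---and then to establish $-$-invariance of $\cR_3^+$ separately. Granting these, (ii) for $+$ is immediate because $\cR_3^+\subseteq \cR_3$; and (i) follows by induction on the position $k$ of the first $+$ in $\bs{s}_n$: the parameters lie in $\cR_3$ up to step $k-1$, step $k$ drives them into $\cR_3^+$ by the core fact, and (ii) then keeps them there for all remaining steps.

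The inclusion $+(\cR_3)\subseteq \cR_3^+$ I would prove by a Jacobian/parametric analysis. Regarding the $+$-map in coordinates as $(p,m)\mapsto (P,M)=(p(p+2z),\,m(m+2z))$ with $z=1-p-m$, a direct calculation gives Jacobian determinant $4(z^2-pm)$, vanishing exactly on the singular curve $\{z^2=pm\}\subset \cR_3$. Parametrising this locus by $u=\sqrt{p}$, $v=\sqrt{m}$ (so that $u^2+uv+v^2=1$) and setting $t=u^2$, a short algebraic manipulation transforms $(u^3(u+2v),\,v^3(v+2u))$ into precisely $(p^*(t),m^*(t))$ from \eqref{eqn:limiting_curve}. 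The three edges $m=0$, $p=m$ and $z=0$ of $\cR_3$ map into the axis $M=0$, the diagonal $P=M$, and the arc $\sqrt{P}+\sqrt{M}=1$ respectively, all lying in $\cR_3^+$; the folded image therefore fills the region bounded above by the limiting curve, as required.

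For $-$-invariance I would use a monotonicity reduction. The map $M\mapsto (P^-,M^-)=(P^2+M^2,\,2PM)$ is coordinatewise increasing in $M$ for fixed $P$; and the upper envelope $B(P)$ of $\cR_3^+$---equal to $P$ for $P\leq 1/3$ and to the decreasing branch $m^*(t)$ with $p^*(t)=P$ for $P\geq 1/3$ (decrease of $m^*$ in $t$ follows by differentiating \eqref{eqn:limiting_curve})---makes $\cR_3^+$ down-closed in $M$ for each $P$. These two observations combine to reduce $-$-invariance of $\cR_3^+$ to $-$-invariance of its upper boundary. On the $M=0$ edge the claim is trivial; on the diagonal with $P\leq 1/3$ a direct computation yields $-(P,P,1-2P)=(2P^2,2P^2,1-4P^2)$, which satisfies $P^-\leq 2/9<1/3$ and therefore lies on the diagonal portion of $\cR_3^+$. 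The limiting-curve portion $(P,M)=(p^*(t),m^*(t))$, $t\in[1/3,1]$, reduces to a one-variable inequality in $t$.

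This final inequality is the main technical obstacle. Substituting the curve parametrisation in the variable $w=uv$ (so that $P+M=1-3w^2=1-Z$ and $PM=w^3(3w+2)$) into the defining inequality $3P^-M^-\leq (Z^-)^2+(2/\sqrt{3})(Z^-)^{3/2}$ of $\cR_3^+$ applied to the $-$-image reduces the claim to a condition in $w\in[0,1/\sqrt{3}]$ involving polynomials and half-integer powers of $1-3w^2$, which I expect can be dispatched by squaring and polynomial manipulation. A conceivable alternative is to exhibit an explicit $+$-preimage $(p',m',z')\in \cR_3$ of $-(P,M,Z)$, giving $-(P,M,Z)\in +(\cR_3)\subseteq \cR_3^+$ directly; this reduces to solving a quartic in $u=1+z'$ and rigorously showing that a root lies in $[1,2]$, which also appears nontrivial.
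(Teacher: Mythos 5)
Your overall architecture coincides with the paper's: part (i) reduces to showing the $+$-map sends all of $\cR_3$ below the limiting curve, part (ii) to $-$-invariance of $\cR_3^+$, and (i) then follows by induction from the first $+$. Your identification of the limiting curve as the image of the critical locus $\{z^2=pm\}$ of the $+$-map is exactly the locus the paper finds by maximizing $m^+$ at fixed $p^+$ (indeed $u+2v=\sqrt{4-3u^2}$ gives $P=\sqrt{4t^3-3t^4}$ with $t=u^2$, and your curve algebra is correct: on the curve $Z=3(uv)^2$ and $PM=(uv)^3(2+3uv)$, so the curve satisfies $3PM=Z^2+\tfrac{2}{\sqrt3}Z^{3/2}$); your monotonicity reduction of $-$-invariance to the upper boundary mirrors the paper's reduction of its second optimization problem to the limiting curve.

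The difficulty is that the proposal stops exactly at the steps that carry the weight, so as written it is not a proof. In part (i), the fold/boundary argument needs, besides the fold image, that the images of the three edges of $\cR_3$ lie in $\cR_3^+$; for the edge $z=0$ this is the nontrivial claim that the arc $\sqrt P+\sqrt M=1$ stays below the limiting curve, which you assert without argument (it is true, but requires an estimate of the same flavour as Lemma~\ref{lem:upper_bnd}), and one must also spell out why $\partial\bigl(+(\cR_3)\bigr)$ being contained in these curves forces the whole image below the curve (a connectedness argument, plus the check that the diagonal edge maps only into $P=M\le 1/3$). The paper's formulation as a constrained maximization of $m^+$ over all of $\cR_3$ handles interior and boundary at once and needs none of these checks. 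In part (ii), the one-variable inequality in $w$ is precisely the content of $-$-invariance, and you leave it to ``squaring and polynomial manipulation'' that you expect to work; until it is carried out (and until the equivalence of the inequality $3PM\le Z^2+\tfrac{2}{\sqrt3}Z^{3/2}$ with the parametric definition of $\cR_3^+$ is justified, which uses the slope bound $-1\le\partial m^*/\partial p^*\le 0$), the lemma is not established. The route is viable --- numerically the inequality holds with ample margin, and note $w=uv\le 1/3$ rather than $1/\sqrt3$ --- but the paper closes this step differently and more cleanly: it compares slopes, proving \eqref{eq:derivative}, which via Property~\ref{prop:limiting} reduces to the trivial bound $\bigl(\frac{\partial m^*/\partial t}{\partial p^*/\partial t}\bigr)^2\le 1$ with no fractional-power algebra.
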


Proof of Lemma \ref{lem:region} is given in Appendix \ref{pf:region}.

\subsection{Polarization of Quantized Statistics}\label{sec:prob_setting}

With a similar approach to those in \cite{Urbanke} and  \cite{Arikan_polarization}, parameters of quantized statistics can be examined in a probabilistic setting. The setting is described below:
\\
Fix $\Omega \triangleq \{+,-\}^*$ and let $\bs{S}_n = (S_1,S_2,\ldots,S_n)$ be a sequence of $n$ random variables where each $S_i$ is independently and uniformly distributed on $\{+,-\}$. Define the natural filtration $\{\mathcal{F}_n\}_{\mathbb{N}}$ with $\mathcal{F}_n \triangleq \sigma(\bs{S}_n)$, $n \geq 1$ and $\mathcal{F}_0 \triangleq \{\Omega,\emptyset\}$. Also define $\mathcal{F} \triangleq \sigma((S_n)_\mathbb{N})$. These ingredients completely define the probability space with filtration $(\Omega,\mathcal{F},\{\mathcal{F}_n\},\mathbb{P})$ and for a quantized statistic obtained in $n$ polarization steps, any of its parameter becomes an $\cF_{n}$-measurable random variable, namely $P_n \triangleq p^{\bs{S_n}}$, $Z_n \triangleq z^{\bs{S_n}}$ and $M_n \triangleq m^{\bs{S_n}}$. Also note that any function of $D_n \triangleq (P_n,Z_n,M_n)$ becomes random.

The quantized statistic $Y^{\bs{S}_n}$ can also be represented as a 'quantized' or 'degraded' synthetic BMS channel $\tilde W^{\bs{S}_n}$ with
$$
\tilde {W}^{\bs{S}_n}(y|0) = \begin{cases} P_n, & y = 1\\
Z_n, & y = 0\\
M_n, & y = -1
\end{cases}\text.
$$

It is known that any bounded submartingale or supermartingale converges almost surely (see, e.g. \cite{Martingales}). Therefore, if a function of $D_n$ is a submartingale or supermartingale, it may give information on whether polarization occurs. From this perspective, we list some consequences of the quantization procedure $Q_\beta$ in terms of probabilistic arguments. One can verify that $P_n$, $M_n$, $Z_n$ themselves exhibit submartingale/supermartingale properties \cite{Urbanke}. Moreover, the mutual information of $\tilde{W}^{\bs{s}_n}$, 
	$$
	I(\tilde{W}^{\bs{s}_n}) \triangleq (p^{\bs{s}_n}+m^{\bs{s}_n})\left(1-h\left(\frac{m^{\bs{s}_n}}{p^{\bs{s}_n}+m^{\bs{s}_n}}\right)\right)
	$$
	is a supermartingale. This property follows simply from data processing inequality as the average mutual information is preserved without quantization. 
\begin{lemma}[\!\cite{Urbanke}, Lemma 4]\label{lem:polarize}
	The random variables $P_n$, $Z_n$, $M_n$ converge almost surely. Moreover, $Z_\infty \triangleq \lim_{n \to \infty} Z_n = 0$ or $1$,
	$P_\infty \triangleq \lim_{n \to \infty} P_n = 0$ or 1 and 	$M_\infty \triangleq \lim_{n \to \infty} M_n = 0$ almost surely. Namely, $Y^{\bs{S}_n}$ polarizes.
\end{lemma}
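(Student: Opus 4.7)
The plan is to exploit the (super)martingale structure encoded in the recursions \eqref{eqn:updates3}, together with the fact that $S_{n+1}$ is uniform on $\{+,-\}$ and independent of $\mathcal{F}_n$. First I would compute the one-step conditional expectations directly from \eqref{eqn:updates3}: averaging the two branches and using $P_n+M_n+Z_n=1$ yields
\begin{gather*}
\mathbb{E}[Z_{n+1}\mid\mathcal{F}_n]=Z_n+P_nM_n, \\
\mathbb{E}[M_{n+1}\mid\mathcal{F}_n]=M_n-\tfrac12 M_n^2, \\
\mathbb{E}[P_{n+1}\mid\mathcal{F}_n]=P_n-M_n\bigl(P_n-\tfrac12 M_n\bigr).
\end{gather*}
Since $0\le M_n\le P_n$ and all three processes live in $[0,1]$, $Z_n$ is a bounded submartingale while $M_n$ and $P_n$ are bounded supermartingales. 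The martingale convergence theorem then delivers a.s.\ (and $L^1$) limits $P_\infty$, $Z_\infty$, $M_\infty$, establishing the first assertion.

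Next I would identify the limits. Taking unconditional expectations in the second identity and telescoping gives $\tfrac12\sum_{n\ge 0}\mathbb{E}[M_n^2]=\mathbb{E}[M_0]-\lim_n \mathbb{E}[M_n]\le 1$, hence $\mathbb{E}[M_n^2]\to 0$; combined with the a.s.\ convergence already obtained, this forces $M_\infty=0$ a.s. To pin down $Z_\infty$, I would restrict attention to the event $\{S_{n+1}=-\}$: on this event \eqref{eqn:updates3} gives the deterministic increment $Z_{n+1}-Z_n=Z_n-Z_n^2=Z_n(1-Z_n)$. Since $Z_n\to Z_\infty$ a.s., all increments $Z_{n+1}-Z_n$ tend to $0$ a.s., and since $\{S_{n+1}=-\}$ occurs infinitely often almost surely, passing to this random subsequence yields $Z_\infty(1-Z_\infty)=0$ a.s., i.e.\ $Z_\infty\in\{0,1\}$. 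Finally, the identity $P_n+M_n+Z_n=1$ is preserved by \eqref{eqn:updates3}, so the a.s.\ limits satisfy $P_\infty=1-Z_\infty$, giving $P_\infty\in\{0,1\}$ as an immediate corollary.

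The calculations are essentially routine. The one place I would be slightly careful is the subsequence argument, which transfers a.s.\ convergence of the whole sequence $(Z_n)$ to convergence of the deterministic increment formula along the random index set $\{n:S_{n+1}=-\}$; this is harmless once one observes that the set is a.s.\ infinite. I do not anticipate a deeper obstacle, so the full proof should be fairly short.
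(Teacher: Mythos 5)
Your proof is correct and follows essentially the same route the paper indicates: the paper only remarks that the lemma "simply follows from the fact that $Z_n$ is a submartingale and $M_n$ a supermartingale" (citing Lemma 4 of the referenced work), and your argument fleshes out exactly that outline, with correct one-step drift computations, the telescoping bound forcing $M_\infty=0$, and the standard subsequence argument on $\{S_{n+1}=-\}$ giving $Z_\infty(1-Z_\infty)=0$.
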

Lemma \ref{lem:polarize} simply follows from the fact that $Z_n$ is a submartingale and $M_n$ supermartingale.

Knowing that the quantized statistics polarize, we elaborate on the question of what fraction of these statistics carry lossless information. We note that it is very hard to obtain an exact expression for this fraction. Let $\gamma$ denote the fraction of the lossless statistics. Lower and upper bounds for $\gamma$ can be obtained from the submartingale and supermartingale properties of some functions $f(D_n)$ with $f(1,0,0) = 1$ and $f(0,1,0) = 0$. Suppose $f(D_n)$ is a bounded submartingale (supermartingale), i.e., it satisfies $\frac{f(d^+)+f(d^-)} 2  \substack{\geq \\ (\leq)}  f(d), \forall d \in \cR_3$. Then $\gamma = \E {f(P_\infty,Z_\infty,M_\infty)} \substack{\geq \\ (\leq)} f(p,z,m)$, which shows that $f$ is useful to obtain an lower (upper) bound on $\gamma$. In \cite{Urbanke} it is shown that $I(W)^2 \leq \gamma \leq I(W)$ as $I(\tilde{W}^{\bs{S}_n})$ is a supermartingale and $I(\tilde{W}^{\bs{S}_n})^2$ submartingale.
In addition, we have numerically found that $I^{1.24}(\tilde W^{\bs{S}_n})$ is submartingale if the process starts in $\cR_3^+$. Hence, we have the following improved lower bound for $\gamma$.
\begin{lemma}\label{lem:new_bound}
	If the original $(p,m)$ belongs to $ \cR_3^+$, then $I^{1.24}(W)$ is a lower bound for $\gamma$. If not, then $\frac 1 2 {I^{1.24}(\tilde W^+)} + \frac 1 2 I^2(\tilde W^-)$ is a lower bound for $\gamma$. More precisely, define 
	\begin{equation*}
	F_0(W) \triangleq \begin{cases} I^{1.24}(W), & (p,m) \in R_3^+\\
	\frac 1 2 I^{1.24}(\tilde W^+) + \frac 1 2 I^2(\tilde W^-),& \text{else}
	\end{cases}\text.
	\end{equation*}
	Then, $F_0(W) \leq \gamma$. 
	\end{lemma}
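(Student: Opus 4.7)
The plan is to exhibit a function $f$ on $\cR_3$ with $f(1,0,0)=1$ and $f(0,1,0)=0$ whose values along $\{D_n\}$ form a bounded submartingale; martingale convergence combined with Lemma \ref{lem:polarize} would then give
\[
f(D_0) \leq \E{f(D_\infty)} = \Pr(P_\infty = 1) = \gamma,
\]
since on $\{P_\infty = 1\}$ the limit $D_\infty = (1,0,0)$ gives $f(D_\infty)=1$ and on $\{P_\infty = 0\}$ (i.e.\ $\{Z_\infty = 1\}$) it gives $f(D_\infty)=0$. The existing $I^2$ lower bound of \cite{Urbanke} is of exactly this form; my improvement would replace $I^2$ by $I^{1.24}$, which (as noted just before the statement) is numerically seen to satisfy the submartingale inequality only on the restricted region $\cR_3^+$. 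This restriction is precisely what forces the piecewise definition of $F_0$.

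For the first branch, I would assume $(p,m) \in \cR_3^+$. By Lemma \ref{lem:region}(ii) the random process $D_n$ never leaves $\cR_3^+$, so the one-step submartingale inequality
\[
\tfrac{1}{2}\,I^{1.24}(\tilde W^{\bs{s}_n,+}) + \tfrac{1}{2}\,I^{1.24}(\tilde W^{\bs{s}_n,-}) \;\geq\; I^{1.24}(\tilde W^{\bs{s}_n}),
\]
valid for all $(p^{\bs{s}_n},m^{\bs{s}_n}) \in \cR_3^+$, actually applies at every step of the process started at $D_0$. Since $I^{1.24}$ is bounded in $[0,1]$, the general principle above gives $I^{1.24}(W) \leq \gamma$.

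For the second branch, when $(p,m) \notin \cR_3^+$, I would perform one polar step by hand and condition on $S_1$. Writing $\gamma(\cdot)$ for the lossless fraction as a function of the starting channel, the law of total probability yields
\[
\gamma(W) \;=\; \tfrac{1}{2}\,\gamma(\tilde W^+) \,+\, \tfrac{1}{2}\,\gamma(\tilde W^-).
\]
By Lemma \ref{lem:region}(i), $(p^+,m^+)$ lies in $\cR_3^+$, so the first branch applies to $\tilde W^+$ and gives $\gamma(\tilde W^+) \geq I^{1.24}(\tilde W^+)$. For $\tilde W^-$ I cannot guarantee entry into $\cR_3^+$, so I fall back on the global $I^2$ submartingale of \cite{Urbanke} to get $\gamma(\tilde W^-) \geq I^2(\tilde W^-)$. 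Combining the two halves produces $\gamma(W) \geq F_0(W)$.

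The main obstacle, which dictates both the piecewise structure of $F_0$ and the appearance of the non-round exponent $1.24$, is the verification of the submartingale inequality for $I^{1.24}$ on $\cR_3^+$. After eliminating $z = 1 - p - m$ this is a two-variable inequality on the region cut out by the limiting curve of Lemma \ref{lem:region}, and the exponent is not structurally natural: any analytic proof would seemingly require a careful Taylor expansion near the fixed points $(1,0,0)$ and $(0,1,0)$, together with monotonicity arguments in the interior and a separate treatment of the boundary curve. I would therefore, as in the paper itself, discharge this step by numerical verification rather than by a closed-form argument.
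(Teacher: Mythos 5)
Your proposal is correct and follows essentially the same route the paper takes: the general bounded-submartingale principle stated just before the lemma, the numerically verified submartingale property of $I^{1.24}$ on $\cR_3^+$ (kept inside $\cR_3^+$ by Lemma \ref{lem:region}(ii)), and, when $(p,m)\notin\cR_3^+$, one hand-performed polar step using Lemma \ref{lem:region}(i) for the $+$ branch and the $I^2$ bound of \cite{Urbanke} for the $-$ branch. No gaps beyond the numerical verification that the paper itself relies on.
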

\begin{corollary}\label{corr:new_bound_n}$F$ can be improved by increasing the number of polarization steps. Namely, define
	$$
	F_n(W) \triangleq \begin{cases} \frac 1 {2^n} \sum_{\bs{s}_n \in \{+,-\}^n}I^{1.24}(\tilde W^{\bs{s}_n}),\\ \qquad(p,m) \in R_3^+\\
	 \frac 1 {2^n}\left(\sum_{\bs{s}_n \in \{+,-\}^n \setminus (-)^n}I^{1.24}(\tilde W^{\bs{s}_n}) + I^2(\tilde W^{(-)^n}\!)\!\right)\!\!,\\ \qquad \text{else}
	\end{cases}\text.
	$$
	Then, $F_0(W) \leq F_n(W) \leq \gamma$.
	\end{corollary}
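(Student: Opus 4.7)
The plan is to first prove the monotonicity $F_n(W) \leq F_{n+1}(W)$ by a one-step submartingale expansion, then obtain the upper bound $F_n(W) \leq \gamma$ as a consequence by letting $n \to \infty$.

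For the monotonicity, I would expand each term appearing in $F_n(W)$ according to which functional is attached to it. For any $\bs{s}_n \neq (-)^n$, Lemma \ref{lem:region}(i) puts $\tilde W^{\bs{s}_n}$ inside $\cR_3^+$, and Lemma \ref{lem:region}(ii) keeps its one-step descendants there; by the $\cR_3^+$-submartingale property asserted in Lemma \ref{lem:new_bound},
\begin{equation*}
I^{1.24}(\tilde W^{\bs{s}_n}) \;\leq\; \tfrac{1}{2}\bigl(I^{1.24}(\tilde W^{\bs{s}_n+}) + I^{1.24}(\tilde W^{\bs{s}_n-})\bigr).
\end{equation*}
For the single exceptional term $I^2(\tilde W^{(-)^n})$ (present only in the ``else'' case), I invoke the general submartingale property of $I^2$ recalled in Section \ref{sec:prob_setting}, giving
\begin{equation*}
I^{2}(\tilde W^{(-)^n}) \;\leq\; \tfrac{1}{2}\bigl(I^{2}(\tilde W^{(-)^n+}) + I^{2}(\tilde W^{(-)^{n+1}})\bigr).
\end{equation*}
The key trick is that $\tilde W^{(-)^n+}$ contains at least one $+$, hence lies in $\cR_3^+$ by Lemma \ref{lem:region}(i); and since $I \in [0,1]$ with $1.24 < 2$, we may upgrade $I^2(\tilde W^{(-)^n+}) \leq I^{1.24}(\tilde W^{(-)^n+})$. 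Substituting and summing the $2^n$ expansions, the indices $\{\bs{s}_n + : \bs{s}_n \neq (-)^n\} \cup \{\bs{s}_n - : \bs{s}_n \neq (-)^n\} \cup \{(-)^n+\}$ cover precisely $\{\bs{s}_{n+1} : \bs{s}_{n+1} \neq (-)^{n+1}\}$, and the lone leftover term $I^2(\tilde W^{(-)^{n+1}})$ matches the definition of $F_{n+1}(W)$.

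For the upper bound, I would argue that $\lim_{n \to \infty} F_n(W) = \gamma$, so by monotonicity $F_n(W) \leq \gamma$ for every $n$. The $I^2(\tilde W^{(-)^n})$ piece is bounded by $1$ and carries weight $2^{-n}$, so its contribution vanishes. The remaining sum equals $\E{I^{1.24}(\tilde W^{\bs{S}_n})\mathbbm{1}_{\bs{S}_n \neq (-)^n}}$. By Lemma \ref{lem:polarize}, $(P_n,Z_n,M_n) \to (P_\infty,Z_\infty,M_\infty)$ a.s.\ with $P_\infty \in \{0,1\}$, $Z_\infty = 1-P_\infty$, $M_\infty = 0$, so $I(\tilde W^{\bs{S}_n}) \to \mathbbm{1}_{\{P_\infty = 1\}}$ a.s.\ and hence $I^{1.24}(\tilde W^{\bs{S}_n}) \to \mathbbm{1}_{\{P_\infty = 1\}}$ a.s. The event $\bigcap_n \{\bs{S}_n = (-)^n\}$ has probability zero, so $\mathbbm{1}_{\bs{S}_n \neq (-)^n} \to 1$ a.s. Bounded convergence gives the expectation $\to \gamma$.

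The main obstacle is the bookkeeping of the two cases $(p,m) \in \cR_3^+$ versus the complement, together with the fact that the $(-)^n$ trajectory may never enter $\cR_3^+$. Handling that single orbit separately, and in particular exploiting the inequality $I^2 \leq I^{1.24}$ on $[0,1]$ to re-promote the $+$ descendant back into the $I^{1.24}$-family, is the step that lets the telescoping argument close cleanly.
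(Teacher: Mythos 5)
Your proof is correct and follows essentially the argument the paper has in mind: iterate the $\cR_3^+$-submartingale inequality for $I^{1.24}$ (via Lemma \ref{lem:region} to keep all trajectories with at least one $+$ inside $\cR_3^+$), handle the lone $(-)^n$ orbit with the global $I^2$ submartingale and the bound $I^2 \leq I^{1.24}$ on $[0,1]$, and obtain $F_n(W) \leq \gamma$ from monotonicity together with $F_n(W) \to \E{\mathbbm{1}_{\{P_\infty = 1\}}} = \gamma$ by Lemma \ref{lem:polarize} and bounded convergence. The bookkeeping of descendants of $\{\bs{s}_n \neq (-)^n\}$ plus $(-)^n+$ covering $\{\bs{s}_{n+1} \neq (-)^{n+1}\}$ is exactly right, so no gaps.
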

The proposed method for calculation of the lower bound in \cite{Urbanke} relies on the fact that $\gamma$ is bounded from above and below as $\E{I^{2}(\tilde W^{\bs{S}_n})} \leq \gamma \leq \E{I(\tilde W^{\bs{S}_n})}$, and $\E{I(\tilde W^{\bs{S}_n})} - \E{I^2(\tilde W^{\bs{S}_n})} \leq \delta$ for some $\delta >0$ and large enough $n$. Therefore, one can obtain a confidence interval of $\delta$ for large $n$. Since $ \E{I(\tilde W^{\bs{S}_n})}-F_n(W)$ decreases faster, the same confidence interval $\delta$ can be achieved with smaller $n$ compared to the first method. This results in an improved calculation method for the lower bound.
\subsection{Rate of Polarization}
From the previous section, we know that the quantized statistics polarize. However, it is required that the error probability $P_e(\tilde W^{\bs{S}_n}) \triangleq M_n + \frac 1 2 Z_n$ of each perfect statistic decays fast enough, i.e. $o(2^{-n})$, to ensure reliable communication under the aforementioned quantization procedure. For the unquantized case, it is found in \cite{Telatar_Rate} that the Bhattacharyya parameter $Z_b(W^{\bs{S}_n})$, which is an upper bound to the error probability, decays as $O(2^{-2^{n/2}})$ and in \cite{Urbanke}, it is shown that $Z_b(\tilde W^{\bs{S}_n}) \triangleq 2\sqrt{P_nM_n} + Z_n$ decays as $O(2^{-2^{\alpha n}})$, $\alpha < \frac {\log 1.5} 2$ under $Q_\beta$ according to the previously given probabilistic setting. Since $(P_n,M_n) \in \cR_3^+$ and thus $M_n \leq Z_n$ eventually, this also implies $Z_n$ and $M_n$ decay at least with the same rate. However, one cannot compare the decay rates of $Z_n$ and $M_n$ only knowing the decay rate of $Z_b(\tilde W^{\bs{S}_n})$. If $M_n$ decays much faster than $Z_n$, it is possible that the code constructed with $Q_\beta$ can be concatenated with an erasure-only code as an outer code for large $n$. Unfortunately, this is not the case. To show this, we present the following lemma and theorem, whose proofs are given in Appendices \ref{pf:process_R} and \ref{pf:rate} respectively.
\begin{lemma}\label{lem:processR}
	For all $\epsilon_r > 0$, 
	$$\lim_{n\to \infty} \bP \left(\left|\frac{\log M_n}{\log Z_n} - \phi\right| \leq \epsilon_r \right)= \gamma\text,\quad\phi = \frac{1+\sqrt 5}{2}\text.$$
	\end{lemma}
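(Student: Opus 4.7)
The plan is to take logarithms and track the pair $(X_n, Y_n) \triangleq (-\log Z_n, -\log M_n)$, then reduce the lemma to showing $R_n \triangleq Y_n/X_n \to \phi$ on the good event. By Lemma \ref{lem:polarize}, the almost sure limit is either $(P,M,Z)_\infty = (1,0,0)$ with probability $\gamma$, or $(0,0,1)$ with probability $1-\gamma$. On the useless limit, $X_n \to 0^+$ while $Y_n \to \infty$, so $R_n \to \infty$ a.s., and hence $\bP(|R_n - \phi| \leq \epsilon_r,\ Z_\infty = 1) \to 0$. It therefore suffices to prove $R_n \to \phi$ a.s.\ on the good event $\cG \triangleq \{P_\infty = 1\}$.

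On $\cG$, eventually $(P_n, M_n) \in \cR_3^+$ by Lemma \ref{lem:region}, so $P_n \to 1$ and $M_n \leq Z_n$. Expanding the updates \eqref{eqn:updates3} in log coordinates yields
\begin{align*}
X_{n+1}^+ &= \min(2X_n, Y_n) + O(1), & Y_{n+1}^+ &= X_n + Y_n + O(1),\\
X_{n+1}^- &= X_n - 1 + o(1),         & Y_{n+1}^- &= Y_n - 1 + o(1),
\end{align*}
where the $O(1)$ and $o(1)$ terms come from $-\log(2P_n)\to -1$ and from the subleading $M_n^2$ in the $Y^+$ recursion. In the regime $R_n < 2$ the effective $+$ step is the linear Fibonacci-matrix map $(X,Y)\mapsto(Y,X+Y)$, whose dominant eigenvalue is $\phi$ with eigendirection of slope $\phi$; so repeated $+$ steps force the direction of $(X_n,Y_n)$ toward this eigendirection. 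Projecting onto the ratio, this is the map $R\mapsto 1+1/R$, a strict contraction around its fixed point $\phi$ with factor $|{-1/\phi^2}| = \phi^{-2} < 1$. In the regime $R_n \geq 2$ the $+$ map collapses to $R \mapsto (1+R)/2$, which pushes $R_n$ below $2$ in finitely many steps. Each $-$ step preserves $R_n$ up to $O(1/X_n)$, since both coordinates shift by $-1+o(1)$.

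Combining these ingredients, the strong law of large numbers guarantees infinitely many $+$ steps a.s., and the rate-of-polarization bound $Z_n,M_n = O(2^{-2^{\alpha n}})$ on $\cG$ (from the analysis recalled in the Rate of Polarization subsection, with $\alpha < (\log 1.5)/2$) gives $X_n \geq 2^{\alpha n}$ eventually, so the accumulated perturbation $\sum_k O(1/X_k)$ is finite. Iterating the contraction in the presence of this summable error gives $R_n \to \phi$ a.s.\ on $\cG$, which together with the first paragraph yields $\bP(|R_n-\phi|\leq \epsilon_r)\to \gamma$. The main obstacle is controlling $R_n$ during the transient phase before the contraction takes hold: I must know that $R_n$ does not drift off to $+\infty$ while the per-step errors are still $O(1)$. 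The cleanest route is to show first, using the invariance of $\cR_3^+$ together with the supermartingale/submartingale structure of $I(\tilde W^{\bs{S}_n})$ and of $M_n$, that $R_n$ is eventually confined to a deterministic compact interval inside $(1,\infty)$ on $\cG$; the contraction then operates uniformly on that compact set and finishes the proof.
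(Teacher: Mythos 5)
Your skeleton is the same as the paper's: pass to $(-\log Z_n,-\log M_n)$, observe that a $+$ step acts on the ratio like $R\mapsto 1+1/R$ (a contraction to $\phi$) and a $-$ step perturbs it by $O(R_n/(-\log Z_n))$, invoke the doubly-exponential decay of $Z_n$ (the $O(2^{-2^{\alpha n}})$ bound of \cite{Urbanke}/\cite{Telatar_Rate}) to make the perturbations summable, and handle the useless set separately. The genuine gap is precisely the point you flag as ``the main obstacle'' and then defer: an \emph{upper} bound on $R_n$ during the transient, i.e.\ controlling how far $Z_n^2$ can dominate $2P_nM_n$ in the $+$ update so that $-\log Z_{n+1}^+\geq -\log M_n-o(-\log M_n)$. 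The paper resolves this with the auxiliary process $C_n=Z_n^2/(P_nM_n)$: on the event $\{P_k\geq 1-\delta\}$ one has $c^+\leq \max(c,4/3)$ and $c^-\leq 9c$, so $C_n$ grows at most singly exponentially, which is negligible against $-\log Z_n\approx 2^{\alpha' n}$; this caps $R$ at $2+\delta$ after a single $+$ step inside the good window and at $(2+\delta)e^{\delta}$ thereafter, and only then does the comparison with the exact iteration $X^+=(X+1)/X$, $X^-=X$ go through. Your proposed substitute—deriving a deterministic compact confinement of $R_n$ inside $(1,\infty)$ from the supermartingale/submartingale structure of $I(\tilde W^{\bs{S}_n})$ and $M_n$ together with invariance of $\cR_3^+$—does not plausibly deliver this: the martingale properties constrain expectations of $I$ and $M_n$, not the pathwise ratio of logarithms, and membership in $\cR_3^+$ only constrains one side (it gives $m\lesssim z^{3/2}$, hence the lower bound $R_n\gtrsim 3/2$, which is exactly how the paper bounds $R_n$ away from $1$ via its bound $\bar m(p)=C(1-p)^{3/2}$), while it imposes no upper constraint because $m$ may be arbitrarily small relative to $z$ inside $\cR_3^+$ (near-BEC starting points), so $R_{n_0}$ can be arbitrarily large.

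Your own two-regime idea could in principle replace the $C_n$ argument, but as stated it is incomplete: when $R_n\geq 2$ each $+$ step roughly halves $R_n-1$, while each $-$ step inflates $R_n$ multiplicatively by about $1+1/(-\log Z_n-1)$, and during the transient $-\log Z_n$ may be as small as $\log(1/\delta)$ and even decreasing along $-$ runs; one must therefore run a quantitative SLLN comparison (e.g.\ $(3/4)^{\beta}\bigl(1+\tfrac{1}{\log(1/\delta)-1}\bigr)^{1-\beta}<1$ for the chosen $\delta$, plus an argument that $R_n$ cannot re-escape far above $2$ once captured) before the contraction phase can be invoked. None of this is carried out, so the proof is not complete as written. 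The remaining ingredients are fine: the useless-set argument (which the paper leaves implicit) is correct, and citing the known decay of the Bhattacharyya parameter to get summability of the $O(1/(-\log Z_n))$ errors is legitimate, modulo the usual $\epsilon/3$ bookkeeping of intersecting high-probability events that the paper spells out.
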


Lemma \ref{lem:processR} suggests that with probability close to $\gamma$, $M_n$ and $Z_n$ decay with same rate. With the next theorem, we obtain the exact rate.

\begin{theorem}\label{thm:rate}
	In limit, the random processes $Z_n$ and $M_n$ roughly behave as $O(2^{-2^{\alpha n}})$, $\alpha = \frac {\log \phi}{2}$ with probability close to $\gamma$. That is, for any $\delta,\delta'> 0$,
			\vspace{-0.2cm}
	$$
	\lim_{n\to \infty} \bP\left(2^{-2^{n\frac{\log \phi + \delta'}{2}}} \leq Z_n \leq 2^{-2^{n\frac{\log \phi - \delta}{2}}}\right)= \gamma
			\vspace{-0.2cm}
	$$
	\centerline{and}
	\vspace{-0.2cm}
	$$
\lim_{n\to \infty} \bP\left(2^{-2^{n\frac{\log \phi + \delta'}{2}}} \leq M_n \leq 2^{-2^{n\frac{\log \phi - \delta}{2}}}\right)= \gamma\text.
	$$
	\end{theorem}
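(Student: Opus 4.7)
The plan is to work on the polarization event $\cA \triangleq \{Z_\infty = 0\}$, which by Lemma~\ref{lem:polarize} coincides with $\{P_\infty = 1\} = \{M_\infty = 0\}$ and has probability $\gamma$. On $\cA$ we have $(P_n, Z_n, M_n) \to (1,0,0)$ almost surely, and Lemma~\ref{lem:processR} gives $\log M_n/\log Z_n \to \phi$. I will show that on $\cA$, the process $U_n \triangleq -\log Z_n$ satisfies $\log U_n / n \to \alpha = \tfrac12 \log\phi$ almost surely. The stated sandwich on $Z_n$ is then immediate, and the sandwich on $M_n$ follows from Lemma~\ref{lem:processR}: $-\log M_n = (\phi + o(1))U_n$, so $\log(-\log M_n) = \log U_n + O(1)$, and the constant is absorbed into the tolerances $\delta,\delta'$.

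The first step is to linearize \eqref{eqn:updates3} near $(1,0,0)$. Using $M_n = Z_n^{\phi+o(1)}$ together with $\phi < 2$, the dominant branches of \eqref{eqn:updates3} are $Z_{n+1} \approx 2M_n$ on a $+$-step and $Z_{n+1} \approx 2 Z_n$ on a $-$-step, giving respectively $U_{n+1} = \phi U_n - 1 + o(1)$ and $U_{n+1} = U_n - 1 + o(1)$. Passing to a second logarithm, $\log U_{n+1} - \log U_n = \log\phi + o(1)$ on $+$-steps, while on $-$-steps it equals $O(1/U_n) = o(1)$ because $U_n \to \infty$ on $\cA$. I then introduce a stopping time $\tau$ after which the above approximations hold with uniformly bounded error; $\tau < \infty$ almost surely on $\cA$ by the a.s.\ convergence of $(P_n, Z_n, M_n)$. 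Summing the per-step log-increments for $n\geq \tau$ yields
\[
\log U_n - \log U_\tau = (\log\phi)\, K^+_n(\tau) + o(n),
\]
where $K^+_n(\tau)$ counts $+$-steps in the window $[\tau, n]$. The strong law of large numbers applied to the i.i.d.\ sequence $\{S_i\}$ gives $K^+_n(\tau)/n \to 1/2$ almost surely, and therefore $\log U_n / n \to \alpha$.

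The hard part will be controlling the aggregate effect of the $-$-steps on $U_n$ itself (not just on $\log U_n$). A single $-$-step subtracts a bounded constant from $U_n$, and this subtraction gets amplified by $\phi^k$ through subsequent $+$-steps, so naive bookkeeping gives an accumulated penalty comparable to the leading growth $\phi^{K^+_n}U_\tau$. The argument must exploit that on $\cA$ one has $U_n\to \infty$, so $\tau$ may be chosen large enough (on a subset of $\cA$ of probability arbitrarily close to $\gamma$) that $U_\tau$ dominates this penalty. Turning the heuristic into the sharp double-exponential rate with tolerances $\delta,\delta'$ should follow from a coupling between the true recursion and its linearized version, together with a Borel--Cantelli-type argument that upgrades the "$o(n)$ on the log scale" estimate into the two-sided bounds claimed.
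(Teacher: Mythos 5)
Your route is essentially the paper's: use the $\phi$-ratio between $-\log M_n$ and $-\log Z_n$ to reduce \eqref{eqn:updates3} to one-step bounds of the form $z^+\asymp z^{\phi\pm\epsilon_r}$, $z^-\asymp z$ (your second-logarithm increments say the same thing), then count $+$-steps by the law of large numbers to extract the exponent $\tfrac12\log\phi$, and read off the $M_n$ statement from the ratio. Your transfer of the $M_n$ bound directly through $-\log M_n=R_n\,(-\log Z_n)$ with $R_n\in[\phi-\epsilon_r,\phi+\epsilon_r]$ is a small simplification over the paper, which instead rederives separate recursion bounds for $m^+$, $m^-$ and reapplies the same argument.

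Two points are genuine gaps as written. First, Lemma \ref{lem:processR} states only that $\lim_n\bP(|R_n-\phi|\le\epsilon_r)=\gamma$; it does not give $R_n\to\phi$ almost surely on $\cA=\{Z_\infty=0\}$, nor the uniform-in-$n$ control ``$|R_n-\phi|\le\epsilon_r$ for all $n\ge n_5$ simultaneously'' that your linearization $U_{n+1}=\phi U_n-1+o(1)$ and the summation over the window $[\tau,n]$ require. To get that you must re-enter the proof of the lemma and work on the events $\cB_{n,n_0}(\delta)\cap\cG_{n,n_0}(\beta)$ of probability at least $\gamma-\epsilon$ constructed there -- which is exactly how the paper proceeds (``we continue from the proof of Lemma \ref{lem:processR}''); since your target is an in-probability statement this repair is harmless, but the almost-sure route through $\cA$ is not available from the lemma statement alone, and the per-step error on $+$-steps is $O(\epsilon_r)+O(1/U_n)$, not $o(1)$, so it must be absorbed into $\delta,\delta'$ explicitly. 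Second, the step you defer to ``a coupling together with a Borel--Cantelli-type argument'' -- upgrading $\tfrac13 z^{\phi+\epsilon_r}\le z^+\le 3z^{\phi-\epsilon_r}$ and $\tfrac12 z\le z^-\le 2z$ into the two-sided double-exponential bounds -- is the technical core of the theorem; it is precisely the bootstrapping machinery of \cite{Telatar_Rate} that the paper invokes at this point. Your heuristic (choose $\tau$ with $U_\tau$ large so that the subtracted constants from $-$-steps and the multiplicative constants $2,3$ are dominated) is the right idea, but it is not carried out, and you should also note the easy converse direction: on $\{Z_\infty=1\}$ the event in the theorem eventually fails, which is what makes the limit equal to $\gamma$ rather than merely bounded below by it.
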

Lemma \ref{lem:processR} and Theorem \ref{thm:rate} imply that $Z_n$ and $M_n$ decay at the same rate. Consequently, concatenation with an erasure-only code does not improve the error probability. Also note that the rate of polarization for this particular three-quantized case is bounded away from $O(2^{-2^{n/2}})$, which shows that longer codes are required to ensure reliable communication compared to the unquantized case.

\section{D-Quantized Case}
In this section, we consider static and dynamic quantization procedures $Q_{\beta}^{(D)}$, where $D = 2d+1$ is an odd number by definition. Note that $|\beta| = d $. Similar to the three-level case, we start with a BMS channel $W$ whose output $Y$ takes values in the set $\{0,\pm \lambda_1,\ldots,\pm \lambda_d\}$, $\lambda_i > 0$, $i \in [d]$. Define the parameters of the quantized statistic $Y^{\bs{s}_n}$ as $p_i^{\bs{s}_n}$ , $m_i^{\bs{s}_n}$ and $z^{\bs{s}_n}$ in a similar fashion to that in Section \ref{sec:3quantizedCase} and assume $p_i \geq m_i$. Also define $p^{\bs{s}_n} \triangleq \sum_{i=1}^d p^{\bs{s}_n}_i$ and $m^{\bs{s}_n}\triangleq \sum_{i=1}^d m^{\bs{s}_n}_i$.

In general, it appears to be hard to obtain good lower bounds on the achievable rates for quantization procedures with output size greater than three. However, we have found that there are non-trivial $D$-static and $D$-dynamic quantization procedures that result in the same dynamics as the simple three-quantized case. We formally define these procedures below.
\begin{definition}[Proper quantization procedures]
	A quantization procedure $Q_{\beta(\bP)}^{(D)}$ is proper if $\beta(\bP)_i \neq \beta(\bP)_j$ for all $i\neq j \in [d]$ and $\bP \in \mathcal{P}$. In words, $\beta$ consists of distinct elements.
	\end{definition}
Note that if a quantization procedure is not proper, then it is equivalent to another quantization procedure with $|\beta| < d$.
\begin{lemma}\label{lem:properQ}
	There exists 
	\begin{itemize}
		\item[(i)] a pair of proper $D$-static quantization procedures $Q_{\beta^+}$, $Q_{\beta^-}$ with $Y^+ = Q_{\beta^+}(Y + Y')$, $Y^- = Q_{\beta^-}(Y \boxplus Y')$ that results in the same dynamics as the three-quantized case,
		\item[(ii)] a single proper $D$-static quantization procedure $Q_\beta$ that results in the same dynamics as the three-quantized case.
		\end{itemize}
	\end{lemma}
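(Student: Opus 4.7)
The plan is to exploit that a $D$-static quantization procedure is proper whenever its $d = (D-1)/2$ tuples in $\beta$ are pairwise distinct; there is no requirement that every output level be attained with positive probability. I will design $Q_{\beta^+}$ and $Q_{\beta^-}$ so that only the single tuple $(\alpha_1^+, \gamma_1^+)$ (resp.\ $(\alpha_1^-, \gamma_1^-)$) is ever populated, while the remaining $(\alpha_j^\pm, \gamma_j^\pm)$ for $j \geq 2$ merely pad out the required $D$ levels to make $\beta^\pm$ proper. The quantized statistic at every polarization step then takes values in a three-element set, and the aggregate $(P_n, Z_n, M_n)$ satisfies \eqref{eqn:updates3} by exactly the same calculation as in Section~\ref{sec:3quantizedCase}.

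Concretely, for part (i), fix any distinct positive values $\gamma_1^+ < \gamma_2^+ < \cdots < \gamma_d^+$ and $\gamma_1^- < \gamma_2^- < \cdots < \gamma_d^-$. Set $\alpha_1^+ > 0$ to be at most the infimum of positive values that $Y + Y'$ can take across all polarization steps, and $\alpha_1^- > 0$ to be at most the analogous infimum for $Y \boxplus Y'$. These infima are strictly positive: the initial support $\{0, \pm \lambda_i\}$ is finite, and after the first application the support of the quantized statistic lies in $\{0, \pm\gamma_1^+\}$ or $\{0, \pm\gamma_1^-\}$ depending on the last-applied quantizer, so the subsequent input magnitudes take values only in the finite set $\{0, \gamma_1^+, \gamma_1^-, 2\gamma_1^+, 2\gamma_1^-, \gamma_1^+\boxplus\gamma_1^+, \gamma_1^-\boxplus\gamma_1^-\}$. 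Set $\alpha_j^+$ and $\alpha_j^-$ for $j \geq 2$ above the supremum of the corresponding input magnitudes, which is finite by the same collapse argument. Properness of $\beta^\pm$ is then immediate from the distinctness of the $\gamma_i^\pm$, and by construction only $(\alpha_1^\pm, \gamma_1^\pm)$ is ever used.

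Part (ii) is a specialization: we use a single $Q_\beta$ for both transforms. The binding new constraint is $\alpha_1 \leq \gamma_1 \boxplus \gamma_1$, since $|a \boxplus b| \leq \min(|a|,|b|)$ forces the smallest positive output of the boxplus to equal $\gamma_1 \boxplus \gamma_1$, which is strictly less than $\gamma_1$ yet strictly positive; hence a static $\alpha_1 > 0$ still accommodates both operations. The main obstacle, shared by both parts, is verifying that a \emph{static} choice of thresholds remains valid at every polarization step independently of the history $\bs{s}_n$. This hinges on the support-collapse phenomenon: once we apply the quantizer, the output lies in a fixed finite bounded set $\{0, \pm\gamma_1^\pm\}$ regardless of $\bs{s}_n$ and of $n$, so the quantizer inputs at every subsequent step are uniformly bounded above and below, and the static thresholds chosen above continue to work for all $n$.
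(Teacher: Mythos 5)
Your padding trick does satisfy the letter of the statement: with pairwise distinct dummy tuples the sets $\beta^{+}$, $\beta^{-}$ consist of distinct elements, so the quantizers are ``proper'' as literally defined, and since after one application the statistic lives on $\{0,\pm\gamma_1^{+}\}$ or $\{0,\pm\gamma_1^{-}\}$, the aggregate $(p,m,z)$ obeys \eqref{eqn:updates3} from the second step onward (at the very first step neither your scheme nor the paper's follows \eqref{eqn:updates3} exactly; the paper only invokes ``same dynamics after one polarization step'' in Lemma~\ref{lem:Dlower_bnd}). The genuine problem is that your construction trivializes the lemma: $d-1$ of the levels of your ``$D$-level'' quantizer are never attained by any statistic that actually occurs, so on the relevant inputs it \emph{is} the three-level quantizer in disguise. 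The remark right after the definition of properness --- a non-proper procedure ``is equivalent to another quantization procedure with $|\beta|<d$'' --- shows that properness is meant precisely to exclude procedures that are effectively of lower order, and the use of the lemma in Lemma~\ref{lem:Dlower_bnd} confirms this: the bounds $R_{s,2}^{(D)}$ and $R_{s,1}^{(D)}$ come from optimizing genuinely $D$-ary first-step channels $\tilde W^{+}$, $\tilde W^{-}$ over the thresholds, an optimization that is vacuous when all but one nonzero level is dead. So while the bare existence claim is formally met, the substantive content of the lemma --- that nondegenerate $D$-level procedures can reproduce the three-level dynamics --- is missed.

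The paper's construction keeps all $D$ levels alive. Take output levels $0<\alpha_1<\alpha_i<2\alpha_1$ ($i\neq 1$), with thresholds $\alpha_i$ for $Q_{\beta^+}$ and $\alpha_i\boxplus\alpha_i$ for $Q_{\beta^-}$; for part (ii) use the single quantizer with thresholds $\alpha_i\boxplus\alpha_i$ under the condition $\alpha_i<2(\alpha_1\boxplus\alpha_1)$. Once the statistic is supported on $\{0,\pm\alpha_1,\ldots,\pm\alpha_d\}$, the \emph{sign} of the quantized output depends only on the signs of $Y,Y'$: a $(+,+)$ or $(+,0)$ pair yields a sum of magnitude at least $\alpha_1$ (resp.\ a boxplus of magnitude at least $\alpha_1\boxplus\alpha_1$), hence a positive level; a $(+,-)$ pair yields $|\alpha_i-\alpha_j|<\alpha_1$ (resp.\ $<\alpha_1\boxplus\alpha_1$ in case (ii), using $\alpha_1\boxplus\alpha_1<\alpha_1$), hence maps to $0$ under the $+$ transform; and under the $-$ transform signs multiply with $0$ absorbing. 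Consequently the aggregates $p=\sum_i p_i$, $m=\sum_i m_i$, $z$ evolve exactly by \eqref{eqn:updates3} even though the individual $p_i,m_i$ stay nontrivial and the synthetic channels remain $D$-ary. That is the point your argument sidesteps, and it is what makes the $D$-level rate bounds an actual improvement rather than a restatement of the three-level bound; you should redo the proof with a construction of this kind (or prove that all levels of your quantizer are populated), rather than relying on the loophole in the properness definition.
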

\begin{proof}[Proof Sketch] 
	\begin{itemize}
		\item[(i)]
		Take any 
	$$\beta^+ = \cup_{i=1}^d\{(\alpha_i, \alpha_i)\}
	,\quad \beta^- = \cup_{i=1}^d\{(\alpha_i\boxplus \alpha_i, \alpha_i)\}$$
	 such that $0 < \alpha_1 <\alpha_i < 2\alpha_1$, $i \in [d]$, $i\neq 1$.
	 \item[(ii)] Take $\beta = \cup_{i=1}^d\{(\alpha_i \boxplus \alpha_i, \alpha_i)\}$ such that $0 < \alpha_1 <\alpha_i < 2(\alpha_1 \boxplus \alpha_1)$, $i \in [d]$, $i\neq 1$.
	 \end{itemize}
 Under these assumptions, one can verify that the resulting dynamics for both cases become the same as those in the formerly discussed three-quantized case.
\end{proof}
Lemma \ref{lem:properQ} shows that with a pair of two proper $D$-static quantization procedures, or with a single proper $D$-static quantization procedure, the system performance can be made equivalent to that in the simple three-quantized case. This also implies that there are proper $D$-dynamic quantization schemes with the same performance. Based on this fact, a lower bound on the achievable rates can be derived for $D$-quantization families.
\begin{lemma}\label{lem:Dlower_bnd}
	Consider the function $F_n$ defined in Corollary \ref{corr:new_bound_n} for an $n \geq 0$. Then, the following claims hold:
	\begin{itemize}
		\item[(i)] With a pair of proper $D$-static quantization procedures $Q_{\beta^+}$ and $Q_{\beta^-}$, one can achieve rates greater than
		$$
		R_{s,2}^{(D)}(W) \triangleq \max_{\stackrel{\alpha_1 \leq \alpha_2 \ldots \leq \alpha_d}{\stackrel{\alpha_d \leq 2\alpha_1}{\alpha_1 \boxplus \alpha_1\vee(\alpha_1/2) \leq \lambda_d}}} \frac{F_n(\tilde W^+) + F_n(\tilde W^-)}{2}\text,
		$$
		where 
		$\beta^+ = \cup_{i=1}^d\{(\alpha_i, \alpha_i)\}
		$
		and
		$\beta^- = \cup_{i=1}^d\{(\alpha_i\boxplus \alpha_i, \alpha_i)\}\text.
		$
		\item[(ii)] With a single proper $D$-static quantization procedure $Q_{\beta}$, one can achieve rates greater than
		$$
		R_{s,1}^{(D)}(W) \triangleq \max_{\stackrel{\alpha_1 \leq \alpha_2 \ldots \leq \alpha_d}{\stackrel{\alpha_d \leq 2(\alpha_1 \boxplus \alpha_1)}{\alpha_1 \boxplus \alpha_1\leq \lambda_d}}} \frac{F_n(\tilde W^+) + F_n(\tilde W^-)}{2}\text,
		$$
		where $\beta = \cup_{i=1}^d\{(\alpha_i\boxplus \alpha_i, \alpha_i)\}$.
		\item[(iii)] With a proper $D$-dynamic quantization procedure $Q_\beta$, one can achieve rates greater than
		$$
				R_d^{(D)}(W) \triangleq  \sup_{\stackrel{Q_{\beta(Y+Y')}\in \cQ^{(D)}}{Q_{\beta(Y\boxplus Y')}\in \cQ^{(D)}}} \frac{F_n(\tilde W^+) + F_n(\tilde W^-)}{2}\text,
		$$
		where $Y^+ = Q_{\beta(Y+Y')}(Y+Y')$ and $Y^- = Q_{\beta(Y\boxplus Y')}(Y+Y')$. In other words, quantize $Y+Y'$ and $Y\boxplus Y'$ in the best possible way to maximize the objective function. 
		\end{itemize}
	\end{lemma}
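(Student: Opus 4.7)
The plan is to derive each of the three lower bounds by reducing the $D$-level quantized polarization to the three-level dynamics of Section \ref{sec:3quantizedCase} via Lemma \ref{lem:properQ}, and then invoking Corollary \ref{corr:new_bound_n} separately on the channels $\tilde W^+$ and $\tilde W^-$ produced after one polar step.

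For (i), I fix a tuple $(\alpha_1,\ldots,\alpha_d)$ meeting the constraints in the definition of $R_{s,2}^{(D)}$ and form the pair $(Q_{\beta^+},Q_{\beta^-})$ prescribed by Lemma \ref{lem:properQ}(i). The condition $\alpha_d\leq 2\alpha_1$ ensures the quantized syntheses $\tilde W^{\pm}$ are effectively three-valued and that subsequent applications of the same family preserve this structure, so the parameters of $\tilde W^{\pm}$ and all their descendants evolve according to \eqref{eqn:updates3}. Corollary \ref{corr:new_bound_n} then gives that the fraction of perfect channels among the descendants of $\tilde W^{+}$ (resp.\ $\tilde W^{-}$) is at least $F_n(\tilde W^{+})$ (resp.\ $F_n(\tilde W^{-})$); averaging over the two branches yields an achievable rate of at least $\tfrac12(F_n(\tilde W^+)+F_n(\tilde W^-))$, and maximising over admissible tuples produces $R_{s,2}^{(D)}(W)$. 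The extra condition $\alpha_1\boxplus\alpha_1\vee(\alpha_1/2)\leq\lambda_d$ is the admissibility requirement that the chosen quantizers actually belong to $\cQ^{(D)}$ given the support $\{0,\pm\lambda_1,\ldots,\pm\lambda_d\}$ of $Y$. Part (ii) is entirely analogous, invoking Lemma \ref{lem:properQ}(ii) with the single quantizer $Q_\beta=\cup_{i=1}^d\{(\alpha_i\boxplus\alpha_i,\alpha_i)\}$ and the constraint $\alpha_d\leq 2(\alpha_1\boxplus\alpha_1)$; the same three-level reduction and averaging argument yield $R_{s,1}^{(D)}(W)$.

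For (iii), dynamic quantization strictly enlarges the class of admissible procedures. For any pair $(Q_{\beta(Y+Y')},Q_{\beta(Y\boxplus Y')})\in\cQ^{(D)}\times\cQ^{(D)}$ whose parameters at each step are chosen to satisfy the proper-reduction hypothesis of Lemma \ref{lem:properQ}, the resulting three-level dynamics apply to $\tilde W^{\pm}$ and yield the bound $\tfrac12(F_n(\tilde W^+)+F_n(\tilde W^-))$; taking the supremum over such admissible dynamic pairs gives $R_d^{(D)}(W)$. Since the static pairs from (i) realise particular dynamic pairs, we automatically have $R_d^{(D)}(W)\geq R_{s,2}^{(D)}(W)$, as expected.

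I expect the main technical obstacle to be the inductive verification that under Lemma \ref{lem:properQ}'s constraints the quantized statistic remains effectively three-valued throughout polarization and that its parameters at every level actually satisfy \eqref{eqn:updates3} with the correct initial values from $\tilde W^{\pm}$. In the static cases this reduces to checking that the invariant range for the nonzero support, $(\alpha_1,2\alpha_1)$ or $(\alpha_1\boxplus\alpha_1,2(\alpha_1\boxplus\alpha_1))$, is preserved by the $+$ and $-$ syntheses followed by the chosen quantization; in the dynamic case one must additionally verify that $\beta$ can always be adapted to the evolving distribution without leaving $\cQ^{(D)}$. A secondary subtlety is ensuring that $F_n$, whose definition branches on whether $(p,m)\in\cR_3^+$, is evaluated at the correct parameters of the reduced three-level channels $\tilde W^{\pm}$.
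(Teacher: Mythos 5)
Your proposal follows essentially the same route as the paper's proof: reduce to the three-level dynamics via the procedures of Lemma \ref{lem:properQ} after one polarization step, apply the lower bound $F_n$ of Corollary \ref{corr:new_bound_n} to $\tilde W^+$ and $\tilde W^-$ and average, and for (iii) choose a proper dynamic quantization at every subsequent step while optimizing the first-step quantizers. The only (immaterial) difference is your reading of the last constraints such as $\alpha_1\boxplus\alpha_1\vee(\alpha_1/2)\leq\lambda_d$ as admissibility conditions, whereas the paper states they are added merely to make the optimization region compact so the maximum is attained.
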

\begin{proof} For (i) and (ii), take the procedures described in Lemma \ref{lem:properQ}. Since the evolution of the parameters are same as the three-quantized case after one polarization step, we use the same lower bound. The last inequalities are added to make the region compact. For (iii), we see that at any step, a proper dynamic quantization exists to ensure that the parameters evolve similarly to the three-quantized case. Quantization at first step is optimized to get a better lower bound.
	\end{proof}
	It is important to note that the special quantization schemes considered in the proof of Lemma \ref{lem:properQ} ensure that the quantized statistics polarize as the resulting dynamics are equivalent to that in three-level case. At first glance, it is not obvious that the statistics polarize for any admissible quantization procedure. Surprisingly, the quantized statistics polarize in a weaker manner under any admissible static or dynamic quantization procedure.
	\begin{theorem}\label{thm:Dpolarize} Consider the probabilistic setting in Section \ref{sec:prob_setting} and define $P_{i,n} \triangleq p_i^{\bs{S}_n}$, $M_{i,n} \triangleq m_i^{\bs{S}_n}$ for all $i \in [d]$. Then, for all static or dynamic quantization procedures in $\cQ$, $Z_n$ converges to $0$ or $1$ almost surely and for any $i$, $P_{i,n}M_{i,n}$ converges to $0$ in probability.
		\end{theorem}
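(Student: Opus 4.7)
The plan is to exploit two pointwise facts that hold for every $Q \in \cQ$: the oddness of $Q$ forces $Q(0)=0$, and the identities $a \boxplus b = 0 \Leftrightarrow a = 0$ or $b=0$ together with $L+L'=0 \Leftrightarrow L=-L'$ let us bound the probability that $\tilde L^\pm$ lands on the erasure symbol. Since the preimage of $0$ under every admissible $Q$ contains the point $0$, whether the quantizer is static or dynamic we obtain the pointwise bounds $Z_n^- \geq \bP(L \boxplus L' = 0) = 2Z_n - Z_n^2$ and $Z_n^+ \geq \bP(L + L' = 0) = Z_n^2 + 2\sum_{i=1}^d P_{i,n} M_{i,n}$. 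Averaging over the i.i.d.\ coin $S_{n+1}$ gives the submartingale relation $\E{Z_{n+1} \mid \cF_n} \geq Z_n + \sum_{i=1}^d P_{i,n} M_{i,n}$.

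For the first claim, since $Z_n \in [0,1]$ is a bounded submartingale, it converges almost surely to some limit $Z_\infty \in [0,1]$. To pin $Z_\infty$ down to $\{0,1\}$, I would use only the $-$ branch: $\E{(Z_{n+1} - Z_n)^2 \mid \cF_n} \geq \tfrac{1}{2}(Z_n^- - Z_n)^2 \geq \tfrac{1}{2} Z_n^2 (1 - Z_n)^2$. Bounded convergence yields $\E{(Z_{n+1} - Z_n)^2} \to 0$, hence $\E{Z_n^2 (1-Z_n)^2} \to 0$; combined with the a.s.\ convergence $Z_n \to Z_\infty$ this forces $Z_\infty(1-Z_\infty)=0$ a.s.

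For the second claim, I would telescope the submartingale bound: summing over $n$ gives $\sum_n \E{\sum_{i=1}^d P_{i,n} M_{i,n}} \leq \E{Z_\infty} - \E{Z_0} \leq 1$, so the nonnegative sequence $\E{\sum_i P_{i,n} M_{i,n}}$ tends to $0$. Since each individual $P_{i,n} M_{i,n}$ is dominated by the full sum, $P_{i,n} M_{i,n} \to 0$ in $L^1$ and therefore in probability for every $i \in [d]$.

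The step I expect to require the most care is the verification of the pointwise bounds on $Z^\pm$ in the dynamic case, where the quantizer is reshaped at every step according to the current distribution. Here the odd-function assumption of Definition \ref{def:qfamily} is essential: for any choice of $\beta(\bP)$ the origin lies in the preimage of $0$, so the two inequalities above survive unchanged, and the entire martingale argument transfers verbatim from the static to the dynamic setting. Note also that no claim is made about $P_{i,n}$ or $M_{i,n}$ converging individually, which is consistent with the weakness of the polarization obtained here compared with the classical and three-level results.
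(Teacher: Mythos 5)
Your proposal is correct. The pointwise bounds $z^-\geq 2z-z^2$, $z^+\geq z^2+2\sum_i p_im_i$ (from $Q(0)=0$, which survives the dynamic case exactly as you argue), the resulting submartingale property of $Z_n$, and the first claim are handled essentially as in the paper: the paper bounds $\E{|Z_{n+1}-Z_n|}\geq \tfrac12\E{Z_n-Z_n^2}$ and lets the increments vanish, while you use squared increments, a cosmetic difference. For the second claim, however, you take a genuinely different and cleaner route. The paper works on the $+$ branch: it writes $Z_n^+-Z_n = Z_n^2-Z_n+2\sum_i P_{i,n}M_{i,n}+J_n$ with a nonnegative remainder $J_n$, lets $\E{|Z_n^+-Z_n|}\to 0$, subtracts off $Z_n^2-Z_n\to 0$, and then uses nonnegativity to peel off each $P_{i,n}M_{i,n}\to 0$ in probability. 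You instead telescope the drift inequality $\E{Z_{n+1}}\geq\E{Z_n}+\E{\sum_i P_{i,n}M_{i,n}}$ to get $\sum_n\E{\sum_i P_{i,n}M_{i,n}}\leq 1$, whence the terms vanish in $L^1$ and a fortiori in probability. This avoids the remainder term and the convergence-in-probability bookkeeping entirely, and it actually buys more than the theorem asks for: summability of the expectations gives $\E{\sum_n\sum_i P_{i,n}M_{i,n}}<\infty$ by monotone convergence, hence $\sum_n\sum_i P_{i,n}M_{i,n}<\infty$ and $P_{i,n}M_{i,n}\to 0$ almost surely, strengthening the stated convergence in probability. The paper's route, on the other hand, makes explicit that the vanishing of $\sum_i P_{i,n}M_{i,n}$ is tied to the $+$ transformation alone, which is the structural observation reused later (e.g., in Lemma \ref{lem:FinalPolarize}); your argument attributes it to the overall drift, which is equivalent here but slightly less transparent about which branch produces the gain.
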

	\begin{proof} We use the abbreviations $X_n \stackrel{\text{a.s.}}{\to} c$ and $X_n \stackrel{P}{\to} c$ to denote that $X_n$ converges to $c\in \mathbb{R}$ almost surely or in probability respectively. For every static or dynamic $Q_\beta \in \cQ$, it is known that $Q_\beta(0) = 0$. This implies that if $Y = 0$ or $Y'=0$ then $Y^- = Q_\beta(Y \boxplus Y') = 0$ and if $Y,Y' = 0$ or $Y = -Y'$ then $Y^+ =  Q_\beta(Y + Y') = 0$. One thus obtains
		\vspace{-0.1cm}
		$$
		 z^- \geq 2z-z^2,\quad z^+ \geq z^2 + 2\sum_{i = 1}^d p_im_i\text.		\vspace{-0.1cm}
		$$
		
		Therefore, $Z_n$ is a bounded submartingale as $\E{Z_{n+1}|\cF_{n}} \geq Z_n +  \sum_{i = 1}^d P_{i,n}M_{i,n}$. Considering the $-$ transformation and following the same steps in \cite{Arikan_polarization}, we obtain
		\begin{align*}
		\E{|Z_{n+1} - Z_n|} &\geq \frac 1 2\E{Z_n^- - Z_n} \geq \frac 1 2\E{Z_n-Z_n^2}\text.
		\end{align*}
		Since $\lim_n \E{|Z_{n+1} - Z_n|}  = 0$ and $Z_n$ converges almost surely, $Z_n \stackrel{\text{a.s.}}{\to} 0\text{ or }1$.
		Studying the $+$ transformation instead, we obtain
		\begin{align*}
		\E{|Z_n^+ - Z_n|} = \E{\left|Z_n^2-Z_n + 2\sum_{i=0}^d P_{i,n}M_{i,n} + J_n \right|}\text,
		\end{align*}
	
		where $J_n$ is an $\cF_{n}$-measurable non-negative remainder term. With a similar reasoning, we know that the right hand side goes to zero as $n$ tends to infinity. This implies that $Z_n^2-Z_n + 2\sum_{i=0}^d P_{i,n}M_{i,n} + J_n\stackrel{P}{\to} 0 $.  $Z_n^2-Z_n\stackrel{\text{a.s.}}{\to} 0$ implies $Z_n^2-Z_n \stackrel{P}{\to} 0$. It is well-known that if $X_n \stackrel{P}{\to} x$ and $Y_n \stackrel{P}{\to} y$ for some constants $x$ and $y$, then $X_n+Y_n \stackrel{P}{\to} x+y$. From this fact, we conclude that $2\sum_{i=0}^d P_{i,n}M_{i,n} + J_n \stackrel{P}{\to} 0$ as well. Since both $2\sum_{i=0}^d P_{i,n}M_{i,n} $ and $J_n$ are non-negative random variables, we have $\sum_{i=0}^d P_{i,n}M_{i,n} \stackrel{P}{\to} 0$ and $P_{i,n}M_{i,n} \stackrel{P}{\to} 0$ for all $i \in [d]$.
		\end{proof}

 Theorem \ref{thm:Dpolarize} has significance in practice as it implies Tal-Vardy construction in \cite{Tal_Vardy} under the assumption that zero is an absorbing support, any quantization scheme as in \cite{Urbanke} and many other schemes weakly polarize. The weak polarization implies that for sufficiently large $n$, some fraction of synthetic channels meet the condition that $\tilde W^{\bs{s}_n}(y|0)$ and $\tilde W^{\bs{s}_n}(y|1)$ have almost non-overlapping supports.  If one is allowed to remap the supports and change the quantization procedure once at some $n$, one can show that the quantized statistics can be forced to polarize strongly.
 
 \begin{lemma}\label{lem:FinalPolarize} Assume $Z_\infty = 0$ with probability $\gamma_Z > 0$, i.e., a non-zero fraction $\gamma_Z$ of quantized statistics tend to become non-zero with probability 1. Given $\epsilon,\delta >0$ and $\delta \leq \gamma_Z$, one can ensure that the quantized statistics polarize and at least $(\gamma_Z-\delta)(1-\epsilon-2\sqrt{d}\epsilon^{1/4})^2$ fraction of the statistics will eventually become perfect by remapping of supports and changing the procedure to the simple three-quantized case after some $n_0(\delta,\epsilon)$.
 	
 	\end{lemma}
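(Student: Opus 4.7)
The plan is to invoke Theorem~\ref{thm:Dpolarize} to pick a finite step $n_0$ at which, with probability at least $\gamma_Z-\delta$, the current $D$-level statistic has both small erasure mass $Z_{n_0}$ and small total confusion mass $\sum_i P_{i,n_0}M_{i,n_0}$. On this good event the output supports $\pm\lambda_i$ are relabeled so that the heavier of each pair represents the correct sign, and then the sign quantizer $Q_\beta^{(3)}$ with $\beta=\{(0,1)\}$ is used for every subsequent polar transform, reducing the problem to the three-quantized case of Section~\ref{sec:3quantizedCase} on a near-noiseless three-level channel; Urbanke's $I^2$ submartingale lower bound then supplies the claimed fraction of perfect descendants.

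First I would choose $n_0 = n_0(\delta,\epsilon)$ as follows. Since $Z_n\stackrel{\text{a.s.}}{\to}Z_\infty\in\{0,1\}$ with $\bP(Z_\infty=0)=\gamma_Z$, Fatou's lemma supplies an $n_1$ with $\bP(Z_{n_1}<\epsilon)\ge \gamma_Z-\delta/2$. By Theorem~\ref{thm:Dpolarize}, $P_{i,n}M_{i,n}\stackrel{P}{\to}0$ for each $i\in[d]$; together with the uniform bound $P_{i,n}M_{i,n}\le 1/4$, dominated convergence yields $\E{\sum_i P_{i,n}M_{i,n}}\to 0$, and Markov's inequality produces an $n_2$ with $\bP\bigl(\sum_i P_{i,n_2}M_{i,n_2}<\sqrt{\epsilon}\bigr)\ge 1-\delta/2$. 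Setting $n_0=\max(n_1,n_2)$ and $\cG = \{Z_{n_0}<\epsilon\}\cap\{\sum_i P_{i,n_0}M_{i,n_0}<\sqrt{\epsilon}\}$, a union bound gives $\bP(\cG)\ge \gamma_Z-\delta$. On $\cG$, combining $\min(a,b)\le\sqrt{ab}$ with Cauchy--Schwarz yields
$$\sum_{i=1}^d \min(P_{i,n_0},M_{i,n_0}) \le \sum_{i=1}^d \sqrt{P_{i,n_0}M_{i,n_0}} \le \sqrt{d\sum_i P_{i,n_0}M_{i,n_0}} < \sqrt{d}\,\epsilon^{1/4}.$$

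At step $n_0$ I would then remap on $\cG$ by swapping the labels of $+\lambda_i$ and $-\lambda_i$ whenever $M_{i,n_0}>P_{i,n_0}$, and for all $n>n_0$ apply the sign quantizer $Q_\beta^{(3)}$, $\beta=\{(0,1)\}$. The effective channel at step $n_0$ is then the three-level BMS channel $\tilde W^{(3)}$ with $z^{(3)}<\epsilon$, $m^{(3)}=\sum_i\min(P_{i,n_0},M_{i,n_0})<\sqrt{d}\,\epsilon^{1/4}$, and $p^{(3)}=1-z^{(3)}-m^{(3)}$. Lemma~\ref{lem:polarize} guarantees that the subsequent three-quantized iterates polarize almost surely, and by the $I^2$ submartingale argument of \cite{Urbanke} the fraction of perfect descendants of $\tilde W^{(3)}$ is at least $I^2(\tilde W^{(3)})$; multiplying by $\bP(\cG)$ then delivers the announced bound, contingent on $I(\tilde W^{(3)})\ge 1-\epsilon-2\sqrt{d}\,\epsilon^{1/4}$.

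The hard part will be this last mutual-information estimate: starting from $I(\tilde W^{(3)})=(1-z^{(3)})\bigl(1-h(m^{(3)}/(1-z^{(3)}))\bigr)$, the naive bound $h(q)\le 2q$ fails near $q=0$ by a logarithmic factor. I expect to handle this by shrinking $\sqrt{\epsilon}$ in Step~1 to a smaller $\epsilon'(d,\epsilon)$ that absorbs this factor when combined with an alternative entropy inequality such as $h(q)\le 2\sqrt{q(1-q)}$ or $h(q)\le -q\log q + q/\ln 2$; the stated asymptotic form of the bound as $\epsilon\to 0$ is preserved. A minor auxiliary point is to check that the $\cF_{n_0}$-measurable remapping, composed with $Q_\beta^{(3)}$ thereafter, constitutes a legitimate dynamic quantization procedure in the sense of Definition~\ref{def:quantization}, so that the overall scheme stays within the admissible family $\cQ$.
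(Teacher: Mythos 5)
Your overall architecture matches the paper's proof: use Theorem~\ref{thm:Dpolarize} to pick $n_0$ so that with probability at least $\gamma_Z-\delta$ the erasure mass and the cross terms are small, remap each support pair so the minority mass $p_i\wedge m_i$ plays the role of $m_i$, switch to $Q_\beta^{(3)}$ with $\beta=\{(0,1)\}$, and finish with the $I^2$ lower bound for the three-quantized process multiplied by the probability of the good event. However, as written your proof does not actually establish the stated fraction $(\gamma_Z-\delta)(1-\epsilon-2\sqrt{d}\epsilon^{1/4})^2$: you defer the mutual-information estimate $I(\tilde W^{(3)})\ge 1-\epsilon-2\sqrt{d}\epsilon^{1/4}$ as a ``hard part'' to be absorbed by shrinking $\epsilon$, so you only obtain an asymptotically equivalent bound with modified constants. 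The paper closes this step in one line via the standard BMS inequality $I(W)\ge 1-Z_b(W)$ with $Z_b(\tilde W)=z+2\sqrt{pm}$: on the good event, $m^{(3)}=\sum_i p_i\wedge m_i\le\sum_i\sqrt{p_im_i}\le d\sqrt{\epsilon}$, hence $Z_b\le\epsilon+2\sqrt{d\sqrt{\epsilon}}=\epsilon+2\sqrt{d}\,\epsilon^{1/4}$, which gives the exact claimed constant. Note that your candidate inequality $h(q)\le 2\sqrt{q(1-q)}$ is precisely equivalent to $1-I\le Z_b$ for this ternary symmetric channel, so the step you flag as hard is not hard at all; the log-factor issue arises only because you first reached for $h(q)\le 2q$.

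A second, related issue is your choice of good event. You aggregate to $\sum_i P_{i,n_0}M_{i,n_0}<\sqrt{\epsilon}$ and then get $m^{(3)}<\sqrt{d}\,\epsilon^{1/4}$ by Cauchy--Schwarz, which after the Bhattacharyya (or entropy) bound yields only $2d^{1/4}\epsilon^{1/8}$ in place of $2\sqrt{d}\,\epsilon^{1/4}$ --- weaker for small $\epsilon$, forcing the $\epsilon$-recalibration you mention. Theorem~\ref{thm:Dpolarize} gives $P_{i,n}M_{i,n}\stackrel{P}{\to}0$ for each $i\in[d]$ separately, so you can (as the paper does) take the per-index event $\{Z_n\le\epsilon,\ P_{i,n}M_{i,n}\le\epsilon\ \forall i\in[d]\}$ with probability at least $\gamma_Z-\delta$ for $n\ge n_0$; then $p_i\wedge m_i\le\sqrt{\epsilon}$ for each $i$ and the exact bound follows with no adjustment. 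Your measure-theoretic bookkeeping (portmanteau/Fatou for the $Z_n$ event, Markov for the cross terms, union bound) and the remark that the $\cF_{n_0}$-measurable remapping followed by the sign quantizer stays within $\cQ$ are fine; with the two fixes above the argument coincides with the paper's.
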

\begin{proof} Given $\epsilon,\delta$, Theorem \ref{thm:Dpolarize} implies the existence of an $n_0$ such that
	$$
	\bP(Z_n \leq  \epsilon, P_{i,n}M_{i,n} \leq \epsilon,\ i \in [d]) \geq \gamma_Z -\delta,\quad n \geq n_0\text.
	$$
	We consider $\bs{s}_n \in \{+,-\}^n$ such that the condition in the above event holds. For such $\bs{s}_n$, $p^{\bs{s}_n}_i \wedge m^{\bs{s}_n}_i \leq \sqrt{\epsilon}$ for all $i \in [d]$. 
	At $n_0$, we remap the support such that $m^{\bs{s}_n}_i \gets p^{\bs{s}_n}_i \wedge m^{\bs{s}_n}_i$ and we switch to the simple three-level quantization procedure $Q_\beta$, $\beta = \{(0,1)\}$. This will ensure that $m^{\bs{s}_n} \leq d\sqrt{\epsilon}$. Under these conditions the Bhattacharyya parameters are bounded as $Z_b(\tilde W^{\bs{s}_n}) \triangleq z^{\bs{s}_n} + 2\sqrt{p^{\bs{s}_n}m^{\bs{s}_n}} \leq \epsilon + 2\sqrt{d}\epsilon^{1/4}$. For BMS channels, it is known that $I(W) \geq 1-Z_b(W)$, thus $I(\tilde W^{\bs{s}_n}) \geq 1-\epsilon - 2\sqrt{d}\epsilon^{1/4}$. Observe that the specific three-quantized case polarizes strongly. Now we use the simple lower bound $I(W)^2$ to show that at least $(\gamma_Z-\delta)(1-\epsilon-2\sqrt{d}\epsilon^{1/4})^2$ fraction of channels will eventually become perfect.
	\end{proof}

Note that the three-quantized case assures that the block error probability behaves roughly as $O(2^{-\sqrt{N}^{\log\phi}})$. Together with Lemma \ref{lem:FinalPolarize}, it implies that one achieves reliable communication at rates arbitrarily close to $\gamma_Z$ by constructing and decoding polar codes with $D$-level quantization procedures, if it is allowed to change the procedure and remap the supports once at an arbitrary $n$. As a final note, we remark that if the quantization procedures take some special form, e.g., if they ensure that the quantized statistics are LLRs as in \cite{Tal_Vardy}, then the remapping of the support is not needed since $M_n \leq P_n$ always.

%
\bibliographystyle{IEEEtran}
\bibliography{Ref4}

\begin{thebibliography}{10}
\providecommand{\url}[1]{#1}
\csname url@samestyle\endcsname
\providecommand{\newblock}{\relax}
\providecommand{\bibinfo}[2]{#2}
\providecommand{\BIBentrySTDinterwordspacing}{\spaceskip=0pt\relax}
\providecommand{\BIBentryALTinterwordstretchfactor}{4}
\providecommand{\BIBentryALTinterwordspacing}{\spaceskip=\fontdimen2\font plus
\BIBentryALTinterwordstretchfactor\fontdimen3\font minus
  \fontdimen4\font\relax}
\providecommand{\BIBforeignlanguage}[2]{{%
\expandafter\ifx\csname l@#1\endcsname\relax
\typeout{** WARNING: IEEEtran.bst: No hyphenation pattern has been}%
\typeout{** loaded for the language `#1'. Using the pattern for}%
\typeout{** the default language instead.}%
\else
\language=\csname l@#1\endcsname
\fi
#2}}
\providecommand{\BIBdecl}{\relax}
\BIBdecl

\bibitem{Urbanke}
S.~H. Hassani and R.~Urbanke, ``Polar codes: Robustness of the successive
  cancellation decoder with respect to quantization,'' in \emph{2012 IEEE
  International Symposium on Information Theory Proceedings}, July 2012, pp.
  1962--1966.

\bibitem{Arikan_polarization}
E.~Arikan, ``Channel polarization: A method for constructing capacity-achieving
  codes for symmetric binary-input memoryless channels,'' \emph{IEEE
  Transactions on Information Theory}, vol.~55, no.~7, pp. 3051--3073, July
  2009.

\bibitem{Telatar_Rate}
E.~Arikan and E.~Telatar, ``On the rate of channel polarization,'' in
  \emph{2009 IEEE International Symposium on Information Theory}, June 2009,
  pp. 1493--1495.

\bibitem{MoriTanaka}
R.~Mori and T.~Tanaka, ``Performance and construction of polar codes on
  symmetric binary-input memoryless channels,'' in \emph{2009 IEEE
  International Symposium on Information Theory}, June 2009, pp. 1496--1500.

\bibitem{Tal_Vardy}
I.~Tal and A.~Vardy, ``How to construct polar codes,'' \emph{IEEE Transactions
  on Information Theory}, vol.~59, no.~10, pp. 6562--6582, Oct 2013.

\bibitem{RamtinHassani}
R.~{Pedarsani}, S.~H. {Hassani}, I.~{Tal}, and E.~{Telatar}, ``On the
  construction of polar codes,'' in \emph{2011 IEEE International Symposium on
  Information Theory Proceedings}, July 2011, pp. 11--15.

\bibitem{LLRBasedCircuit}
A.~{Balatsoukas-Stimming}, M.~B. {Parizi}, and A.~{Burg}, ``Llr-based
  successive cancellation list decoding of polar codes,'' \emph{IEEE
  Transactions on Signal Processing}, vol.~63, no.~19, pp. 5165--5179, Oct
  2015.

\bibitem{Trifonov}
P.~{Trifonov}, ``Efficient design and decoding of polar codes,'' \emph{IEEE
  Transactions on Communications}, vol.~60, no.~11, pp. 3221--3227, November
  2012.

\bibitem{MoriTanaka2}
R.~{Mori} and T.~{Tanaka}, ``Performance of polar codes with the construction
  using density evolution,'' \emph{IEEE Communications Letters}, vol.~13,
  no.~7, pp. 519--521, July 2009.

\bibitem{UniformQuantization}
Z.~{Shi} and K.~{Niu}, ``On uniform quantization for successive cancellation
  decoder of polar codes,'' in \emph{2014 IEEE 25th Annual International
  Symposium on Personal, Indoor, and Mobile Radio Communication (PIMRC)}, Sep.
  2014, pp. 545--549.

\bibitem{Joachim}
\BIBentryALTinterwordspacing
J.~Neu, ``Quantized polar code decoders: Analysis and design,'' \emph{CoRR},
  vol. abs/1902.10395, 2019. [Online]. Available:
  \url{http://arxiv.org/abs/1902.10395}
\BIBentrySTDinterwordspacing

\bibitem{HigherOrder}
G.~{Bocherer}, T.~{Prinz}, P.~{Yuan}, and F.~{Steiner}, ``Efficient polar code
  construction for higher-order modulation,'' in \emph{2017 IEEE Wireless
  Communications and Networking Conference Workshops (WCNCW)}, March 2017, pp.
  1--6.

\bibitem{HardwarePolar}
C.~{Leroux}, I.~{Tal}, A.~{Vardy}, and W.~J. {Gross}, ``Hardware architectures
  for successive cancellation decoding of polar codes,'' in \emph{2011 IEEE
  International Conference on Acoustics, Speech and Signal Processing
  (ICASSP)}, May 2011, pp. 1665--1668.

\bibitem{Martingales}
D.~Williams, \emph{Probability with Martingales}, ser. Cambridge mathematical
  textbooks.\hskip 1em plus 0.5em minus 0.4em\relax Cambridge University Press,
  1991.

\end{thebibliography}

\appendix

\subsection{Proof of \lemref{lem:region}}\label{pf:region}
\begin{itemize}
	\item[(i)] Our purpose here is to show that when $\bs{s}_n$ contains at least one $(+)$, $(p^{\bs{s}_n},m^{\bs{s}_n})$ is driven under the limiting curve. In other words, for a fixed $p^{\bs{s}_n+}$, we want to prove that $m^{\bs{s}_n+}$ cannot exceed the limiting curve. To this end, using \eqref{eqn:updates3}, we formulate the following optimization problem.
	\begin{align*}
	\underset{p,m}{\max}\quad & m^+ = 2m-m^2-2mp\\
	\text{s.t}\quad & p^+ = 2p-p^2-2mp\\
	& p,m \geq 0\\
	& p+m \leq 1
	\end{align*}
	where $p^+$ is a fixed constant in $[0,1]$. From the equality constraint, we have $m = (1-p/2-p^+/2p)$ and the objective function can be modified as
	\begin{gather*}
	2m-m^2+p^2-2p = 1-\left(\frac p 2 + \frac {p^+} {2p}\right)^2 +p^2-2p\\
	= (p-1)^2 + \left(\frac p 2 + \frac {p^+} {2p}\right)^2\text.
	\end{gather*}
	Taking the derivative and setting to 0, we obtain the only extremal $p$ in an implicit function
	$$
	p^+ = \sqrt{4p^3-3p^4}\text.
	$$
	The same extremal $p$ yields the maximized objective function
	$$
	m^+ = 1-3p+\frac 3 2 p + \frac 1 2 \sqrt{4p^3-3p^4}\text.
	$$
	Note that the map $p \to p^+$ is bijective in $[0,1]$. This gives a parametric description of $(p^+,m^+)$, where $p^+,m^+ \in [0,1]$ for $p \in [0,1]$. However, we note that for $p \leq 1/3$, $m^+ \geq p^+$ which is a contradiction to our assumptions. Incorporating the fact that $p^+$ is always greater than $m^+$, the parametric curve can be described as above for $p \in [1/3,1]$, and $p^+ = m^+ = p$ for $p \in [0,1/3)$. Renaming the variable $p$ as $t$, we obtain the same parametric description given in the statement of \lemref{lem:region}.
	
	The optimization problem above was formulated to find the maximum $m^+$ value corresponding to a $p^+$. Hence, given $\bs{s}_n$ contains at least one $(+)$, we have shown that $(p^{\bs{s}_n},m^{\bs{s}_n})$ cannot exceed the limiting curve and any such $(p^{\bs{s}_n},m^{\bs{s}_n})$ is driven into $\cR_3^+$.
	
	\noindent Before proving part (ii), we give the following property.
	\begin{property}\label{prop:limiting}
		The limiting curve is non-increasing and convex on $p \in [1/3,1]$. Moreover, $\frac{\partial m^*}{\partial p^*} \geq -1$ and $\frac{\partial^2m^*}{\partial (p^*)^2}\vert_{p^* \to 1} = \infty$.
	\end{property}
	\begin{proof}
		$\frac{\partial m^*}{\partial p^*} = \frac 1 2 \left(1-\sqrt{\frac 4 t -3}\right)$.
		$$
		\frac{\partial^2m^*}{\partial (p^*)^2} = \frac{\frac{\partial}{\partial t} \frac{\partial m^*}{\partial p^*} }{\frac{\partial p^*}{\partial t} } = \frac 1 {6(t^2-t^3)} \geq 0.
		$$
		The inequality and limit argument follows easily.
	\end{proof}

	\item[(ii)] For this part, we have to show that once a $(p^{\bs{s}_n},m^{\bs{s}_n})$ is driven under the limiting curve, it remains there. Similar to part (i), we consider the following optimization problem to find the maximum value of a $m^-$ with respect to a fixed $p^-$:
	\begin{align*}
	\underset{p,m}{\max}\quad & m^- = 2mp\\
	\text{s.t}\quad & p^- = p^2+m^2\\
	& (p,m) \in \cR_3^+
	\end{align*}
	where $p^-$ is a fixed constant in $[0,1]$. It is easy to see that the optimal $(\hat p,\hat m)$ for this problem also maximizes the function $p+m$. Therefore, for $p^- \leq \frac{2}{9}$, $\hat p = \hat m = \sqrt{\frac{p^-}{2}}$ and $p^- = m^- = 2\hat m \hat p$. The $(p^-,m^-)$ corresponding to $(\hat p,\hat m)$ remains in $\cR_3^+$. If $p^- > \frac{2}{9}$, then the optimal $(\hat p,\hat m)$ always lies on the limiting curve. Therefore, the parametric description for the solution is given by
	\begin{equation*}
	\begin{split}
	\tilde{p}(t) &= (p^*(t))^2 + (m^*(t))^2\\
	& = (4t^3-3t^4) + \left(1-3t+\frac 3 2 t^2 + \frac{\sqrt{4t^3-3t^4}}{2}\right)^2\\
	\tilde{m}(t) &= 2p^*(t)m^*(t)\\
	&=\sqrt{4t^3-3t^4}\left(1-3t+\frac 3 2 t^2\right) + \frac 1 2 (4t^3-3t^4)\text.
	\end{split}
	\end{equation*}
	for $t \in [1/3,1]$.\\
	
		Now, one has to check if $(\tilde{p}(t), \tilde{m}(t)) \in \cR_3^+$ for all $t \in [1/3,1]$. Observe that for any $(p,m) \in \cR_3^+$, $p^- = p^2 + m^2 \leq p$, thus $\tilde p(t) \leq p^*(t)$. The equality holds if and only if $t = 1$. Moreover, we note that $\tilde m(p)$ has to be convex in $p > p_c$ for some critical $p_c$ as its derivative is zero at $p=1$ and being concave will drive it to the negative side, which is impossible. From these facts, we observe that if $\tilde m(p)$ exceeds $m^*(p)$ at some $p$, it is required that $\frac{\partial m^*}{\partial p'} \geq \frac{\partial \tilde m}{\partial p'}$ for some other $p' \geq p$.
		 Hence if we show that this inequality does not hold, then the proof will be complete. Noting that $\tilde p(t) \leq p^*(t)$, it is sufficient to prove the stronger statement
	\begin{equation}\label{eq:derivative}
	\frac{\frac{\partial\tilde{m}}{\partial t}}{\frac{\partial\tilde{p}}{\partial t}} \geq 	\frac{\frac{\partial m^*}{\partial t}}{\frac{\partial p^*}{\partial t}},\quad t \in [1/3,1]\text.
	\end{equation}
	One can derive 
	\begin{equation*}
	\begin{split}
	\frac{\partial\tilde{p}}{\partial t} = 2p^*(t)\frac{\partial p^*}{\partial t} + 2m^*(t)\frac{\partial m^*}{\partial t}\text,\\
	\frac{\partial\tilde{m}}{\partial t} = 2p^*(t)\frac{\partial m^*}{\partial t} + 2m^*(t)\frac{\partial p^*}{\partial t}\text.
	\end{split}
	\end{equation*}
	Hence, 
	$$
	\frac{\frac{\partial\tilde{m}}{\partial t}}{\frac{\partial\tilde{p}}{\partial t}} = \frac{2p^*(t)\frac{\partial m^*}{\partial t} + 2m^*(t)\frac{\partial p^*}{\partial t}}{2p^*(t)\frac{\partial p^*}{\partial t} + 2m^*(t)\frac{\partial m^*}{\partial t}} = \frac{\frac{m^*(t)}{p^*(t)} + \frac{\frac{\partial m^*}{\partial t}}{\frac{\partial p^*}{\partial t}}}{1+\frac{m^*(t)}{p^*(t)}\frac{\frac{\partial m^*}{\partial t}}{\frac{\partial p^*}{\partial t}}}\text.
	$$
	The inequality \eqref{eq:derivative} then becomes
	$$
	\frac{\frac{m^*(t)}{p^*(t)} + \frac{\frac{\partial m^*}{\partial t}}{\frac{\partial p^*}{\partial t}}}{1+\frac{m^*(t)}{p^*(t)}\frac{\frac{\partial m^*}{\partial t}}{\frac{\partial p^*}{\partial t}}} \geq \frac{\frac{\partial m^*}{\partial t}}{\frac{\partial p^*}{\partial t}}
	$$
	and if the denominator is positive for all $t \in [1/3,1]$, we have
	$$
	1 \geq \left(\frac{\frac{\partial m^*}{\partial t}}{\frac{\partial p^*}{\partial t}}\right)^2,
	$$
	which is correct regarding Property \ref{prop:limiting}. 
	As the final step, we show that the denominator is positive. First, note that $\frac{\partial p^*}{\partial t} \geq 0$ and $\frac{\partial m^*}{\partial t} \leq 0$. Then,
	\begin{equation*}
	\begin{split}
	1+\frac{m^*(t)}{p^*(t)}\frac{\frac{\partial m^*}{\partial t}}{\frac{\partial p^*}{\partial t}} \geq 1 + \frac{\frac{\partial m^*}{\partial t}}{\frac{\partial p^*}{\partial t}} \geq 0\text,
	\end{split}
	\end{equation*}
	which is again satisfied because of Property \ref{prop:limiting}, and the first inequality follows from the fact that $m^* \leq p^*$.\\
	
	These together prove that for any $(p,m) \in \cR_3^+$, $(p^-,m^-)$ lies under the limiting curve and hence belongs to $\cR_3^+$. It straightforwardly follows from part (i) that $(p^+,m^+)$ also belongs to $\cR_3^+$. Therefore once a pair $(p,m)$ is driven into $\cR_3^+$, it remains there.
	\end{itemize}
\subsection{Proof of \lemref{lem:processR}} \label{pf:process_R}
To begin with, the following upper bound for the limiting curve will be useful for the proof.

\begin{lemma}\label{lem:upper_bnd}
	The curve $\bar m(p) = C(1-p)^{3/2},\ C\geq 2$ lies above the limiting curve.
\end{lemma}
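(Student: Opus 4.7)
The plan is to establish the chain
\[
m^*(t) \;\leq\; 2(1-t)^3 \;\leq\; 2(1-p^*(t))^{3/2} \;\leq\; C(1-p^*(t))^{3/2}
\]
on $t \in [1/3, 1]$---the portion of the parametric curve lying in $\{m \leq p\} \cap \cR_3$---and to dispatch the range $t \in [0, 1/3]$ separately using that $p^*(t) \leq 1/3$ while $m^*(t) \leq m^*(0) = 1$.

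For the second inequality of the chain, I would factor $1 - (p^*(t))^2 = (1-t)^2(1 + 2t + 3t^2)$, which is an immediate computation. On $[1/3, 1]$ the quadratic $1 + 2t + 3t^2$ is at least $2$ and $1 + p^*(t) \leq 2$, so
\[
1 - p^*(t) = \frac{(1-t)^2(1 + 2t + 3t^2)}{1 + p^*(t)} \geq (1-t)^2,
\]
and raising to the $3/2$ power gives $(1 - p^*(t))^{3/2} \geq (1-t)^3$.

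The first inequality is the main algebraic obstacle. Writing $2m^*(t) = (2 - 6t + 3t^2) + p^*(t)$, the bound $m^*(t) \leq 2(1-t)^3$ rearranges to $p^*(t) \leq 2 - 6t + 9t^2 - 4t^3$. The right-hand side has derivative $-6(1-t)(1-2t)$ and attains its minimum value $3/4$ at $t = 1/2$, so both sides are non-negative on $[0, 1]$ and one may square. The squared inequality reduces to the polynomial identity
\[
(2 - 6t + 9t^2 - 4t^3)^2 - (4t^3 - 3t^4) = 4(1-t)^4(4t^2 - 2t + 1),
\]
whose right-hand side is manifestly non-negative since $4t^2 - 2t + 1 = 4(t - 1/4)^2 + 3/4 > 0$. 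The chief difficulty is guessing this factorization: once one observes that the two sides of the reduced inequality vanish to high order at $t = 1$, the quadruple root becomes plausible, after which polynomial long division peels it off and leaves the positive quadratic factor.

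For the remaining range $t \in [0, 1/3]$, the fact that $m^*$ decreases in $t$ gives $m^*(t) \leq 1$, while $p^*(t) \leq 1/3$ forces $C(1-p^*(t))^{3/2} \geq 2(2/3)^{3/2} = 4\sqrt{6}/9 > 1 \geq m^*(t)$, which closes the argument. The constant $C = 2$ is far from tight---the ratio $m^*/(1-p^*)^{3/2}$ along the curve peaks near $0.613$ at $t = 1/3$---but $C \geq 2$ is what the clean intermediate bound $m^*(t) \leq 2(1-t)^3$ naturally affords.
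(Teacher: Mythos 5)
Your proof is correct; I checked the polynomial identity $(2-6t+9t^2-4t^3)^2-(4t^3-3t^4)=4(1-t)^4(4t^2-2t+1)$ and it holds, and the squaring step is justified since you verified the cubic $2-6t+9t^2-4t^3$ stays positive (minimum $3/4$ at $t=1/2$). The skeleton coincides with the paper's argument: both replace $1-p^*(t)$ by $(1-t)^2$ (the paper via the bound $p^*(t)\le 2t-t^2$, you via the equivalent factorization $1-(p^*(t))^2=(1-t)^2(3t^2+2t+1)$), and both reduce the lemma to the key intermediate bound $m^*(t)\le 2(1-t)^3$. Where you genuinely differ is in how that bound is established: the paper sets $v=1-t$, writes the claim as $C\ge g(v)$ with $g(v)=\bigl(\tfrac32 v^2-\tfrac12+\tfrac12\sqrt{4(1-v)^3-3(1-v)^4}\bigr)/v^3$, and proves by a calculus argument (differentiating, showing an auxiliary function $h(v)$ has no interior zeros, hence $g$ is decreasing with $\lim_{v\to0}g(v)=2$) that $\sup g=2$; you instead square and exhibit an explicit factorization with a manifestly positive quadratic remainder. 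Your route is purely algebraic, immediately checkable, and valid on all of $[0,1]$, at the cost of having to guess the quadruple root at $t=1$; the paper's route additionally shows that $2$ is the \emph{best} constant in the intermediate formulation $m^*(t)\le C(1-t)^3$, which is consistent with your remark that the slack in the actual statement comes only from replacing $(1-p^*)^{3/2}$ by $(1-t)^3$. Two minor points: your separate treatment of $t\in[0,1/3]$ is harmless but not needed in the paper, which (correctly) only considers $t\ge 1/3$, since for $p<1/3$ the boundary of $\cR_3^+$ is not given by the parametric form; and your appeal to "$m^*$ decreases in $t$" is unproven but dispensable, since $m^*(t)\le 1$ follows directly, e.g.\ from $p^*(t)\le 2t-t^2$, which gives $m^*(t)\le (1-t)^2\le 1$.
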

\begin{proof}
	According to the parametric description \eqref{eqn:limiting_curve}, choose any $t \geq 1/3$. At this $t$, we have 
	\begin{align*}
	p^*(t) = \sqrt{4t^3-3t^4}\\
	m^*(t) = 1-3t+\frac 3 2 t^2 + \frac{\sqrt{4t^3-3t^4}}{2}\text.
	\end{align*}
	For the chosen $t$, $\bar m(t) = C(1-\sqrt{4t^3-3t^4})^{3/2}$. Now, one needs to check if
	$$
	C(1-\sqrt{4t^3-3t^4})^{3/2} \geq 1-3t+\frac 3 2 t^2 + \frac{\sqrt{4t^3-3t^4}}{2}.
	$$
	We use the upper bound $2t-t^2 \geq \sqrt{4t^3-3t^4}$ to obtain the stronger statement
	\begin{equation}
	\begin{split}\label{eq:ineq_upper}
	C(1-(2t-t^2))^{3/2} \geq 1-3t+\frac 3 2 t^2 + \frac{\sqrt{4t^3-3t^4}}{2}\\
	\iff C(1-t)^3 \geq 1-3t+\frac 3 2 t^2 + \frac{\sqrt{4t^3-3t^4}}{2}
	\end{split}
	\end{equation}
	With a change of variable $v \triangleq 1-t$ and rearranging the terms, we have
	$$
	C\geq \frac{\frac 3 2 v^2 -\frac 1 2 + \frac{\sqrt{4(1-v)^3-3(1-v)^4}}{2}}{v^3} \triangleq g(v)\text.
	$$
	Observe that 
	$$
	\lim_{v \to 0} g(v) = 2,\quad g(1) = 1\text,
	$$
	hence $g$ is bounded in $(0,1]$. Therefore, if one takes $C = \sup_{v \in (0,1)}g(v)$, the inequality \eqref{eq:ineq_upper} is satisfied. We now show that $g(v)$ is decreasing in $(0,1)$. Taking the derivative, we have
	$$
	g'(v) = \frac{3\sqrt{1-v}\left((1-v)^2 -2 + (1+v)\sqrt{(1-v)(1+3v)}\right)}{2v^4\sqrt{(1+3v)}}\text.
	$$
	It suffices to check if the nominator is non-positive in $(0,1)$. To this end, we need to verify the following statement.
	$$
	h(v) \triangleq (1-v)^2 -2 + (1+v)\sqrt{(1-v)(1+3v)} \leq 0\text.
	$$
	To find the extrema of $h(v)$ in $(0,1)$, we take the derivative of $h(v)$ and equate to zero. 
	\begin{align*}
	h'(v) &= -2(1-v) + \sqrt{(1-v)(1+3v)}\\
	&  + \frac{(1+v)(1-3v)}{\sqrt{(1-v)(1+3v)}} = 0\\
	\iff &-2(1-v)\sqrt{(1-v)(1+3v)}\\
	&+ (1-v)(1+3v) + (1+v)(1-3v) = 0\\
	\iff &1-3v^2 - (1-v)\sqrt{(1-v)(1+3v)} = 0\\
	\iff &1-3v^2 = (1-v)\sqrt{(1-v)(1+3v)}\\
	\iff &1-6v^2 + 9v^4 = (1-v)^3(1+3v),\quad \text{for }v < \frac{1}{\sqrt 3}\\
	\iff& 4v^3(3v-2) = 0\\
	\iff &v = 2/3\text. 
	\end{align*}
	However, $\frac 2 3 > \frac 1 {\sqrt 3}$. Therefore, $h(v)$ has no extremal points in $(0,1)$. Observe that $h$ is continuous and $h(0) = 0$, $h(1) = -1$. These together imply $h(v) < 0$ for $v \in (0,1)$. Hence we have shown that $g'(v) < 0$ for $v \in (0,1)$ and $g(v)$ is decreasing on the same interval. Finally, we obtain $\sup_{v \in (0,1)}g(v) = 2$.\\
\end{proof}

We are now in position to prove \lemref{lem:processR}.

Let $C_n \triangleq \frac{Z_n^2}{P_nM_n}$. Choose a $\delta$ such that $\delta < \epsilon_r$ and $\delta(2+\delta)e^\delta - \frac 4{\log \delta} \leq 1/2$ (e.g. $\delta < 0.003$). Choose a small $\epsilon > 0$.

 Now, define the event $\cA_n(\delta) \triangleq \{P_n  \geq 1-\delta\}$. From the almost sure convergence of $P_n$, we know that
	$$
	\bP(\cup_m \cap_{n\geq m} \cA_n(\delta))  = \lim_m \bP( \cap_{n\geq m} \cA_n(\delta)) = \gamma\text.
	$$
	The sequence above is increasing. Hence, given $\epsilon$, there exists an $n_0(\delta,\epsilon)$ such that
	$$
	\bP(\cap_{n\geq n_0} \cA_n(\delta)) \geq  \gamma-\epsilon/3.
	$$
	This also implies that $\bP(\cap_{k = n_0}^n \cA_k(\delta)) \geq  \gamma-\epsilon/3$ for any $n \geq n_0$. Define $\cB_{n,m}(\delta) \triangleq \cap_{k = m}^n \cA_k(\delta)$. For any $\bs{s}_n \in \cB_{n,n_0}(\delta)$, $n\geq n_0$, the iterations for $C_{n+1}$ can be upper bounded as below. We drop the subscripts and use lowercase characters for simplicity.
	\begin{align*}
	c^+ & = \frac{(z^2 	+ 2mp)^2}{mp(m+2z)(p+2z)} = \frac{mp(z^2/mp + 2)^2}{(m+2z)(p+2z)}\\
	& \leq \frac{mp(z^2/mp + 2)^2}{mp+4z^2+2z\left(m+p\right)} = \frac{(c+2)^2}{3+4c}\\
	&\leq \begin{cases}c, & c > \frac 4 3\\
	\frac 4 3, & c \leq \frac 4 3\end{cases}\text.\\
	c^- &= \frac{(2z-z^2)^2}{2mp(p^2+m^2)} = \frac{c(1+m+p)^2}{2(p^2+m^2)} \leq c\left(1+\frac{1}{m+p}\right)^2\\
	&\leq c\left(\frac{2-\delta}{1-\delta}\right)^2 \leq 9c
	\end{align*}
	since $\delta < 1/2$.
	We create another process $D_n$ as follows: Let $C^*(\delta,\epsilon) = C_{n_0}^* \triangleq \max_{\bs{s}_{n_0}\in\{+,-\}^{n_0}} c^{\bs{s}_{n_0}} \vee \frac 4 3 $. Then,
	$$
	D_{n+1} = 9D_n,\quad n\geq n_0,
	$$
	$$
	D_{n_0} = C_{n_0}^*\text.
	$$
	It is easy to see that for any $\bs{s}_n \in \cB_{n,n_0}(\delta)$, $n\geq n_0$; $D_n$ dominates $C_n$ and therefore,
	\begin{equation}\label{eq:process_C}
	C_n \leq  C^*(\delta,\epsilon)9^{n-n_0}\text.
	\end{equation}

Let $A_n \triangleq -\log M_n$, $B_n \triangleq -\log Z_n$. For $\bs{s}_n \in \cB_{n,n_0}(\delta)$, $n\geq n_0$, we derive upper and lower bounds for $a^+$, $a^-$ and $b^+$, $b^-$:
\begin{equation}\label{eq:abupdates}
	\begin{gathered}
	a-1  \leq a^- \leq a,\\
	b-1 \leq b^- \leq b,\\
	 a + b -\log3\leq a^+ \leq a+b-1,\\
	\left(a - n\log 9 - \log C^*-\left(1 + \frac{2}{C^*9^n}\right)\right)\!\vee \log(1/\delta)  \leq b^+ \!\leq a\text.
	\end{gathered}
	\end{equation}
	The last inequality is obtained using \eqref{eq:process_C} and knowing $Z_n \leq \delta$.
	
	The upper bound derived in \lemref{lem:upper_bnd} yields
	\begin{equation}\label{eq:plus_iter_z}
	\begin{split}
	z^+ = z^2 + 2mp \leq & z^2 + 4(z+m)^{3/2}\\
	\leq & z^2 + 4(2z)^{3/2}\\
	\leq & 13z^{3/2}
	\end{split}
	\end{equation}
	\centerline{and we already have}
	\begin{equation}\label{eq:minus_iter_z}
	z^- \leq 2z\text.
	\end{equation}
	
	Now, define $\cG_{n,n_0}(\beta) $ as the event $\sum_{k=n_0}^n \mathbbm{1}_{\{S_k = +\}} \geq (n-n_0)\beta$, $\beta < 1/2$. For sufficiently large $n$, we know that $\cG_{n,n_0}(\beta)$ occurs with high probability as a result of the law of large numbers. This implies the existence of $n_1 \geq n_0$ satisfying $\bP(\cG_{n,n_0}(\beta)) \geq 1-\epsilon/3$, $n\geq n_1$. Note that $\bP(\cG_{n,n_0}(\beta) \cap \cB_{n,n_0}) \geq \gamma - 2\epsilon/3$ for $n \geq n_1$.

	Using the same machinery in \cite{Telatar_Rate}, one can refer to inequalities \eqref{eq:plus_iter_z}, \eqref{eq:minus_iter_z} and show that there exists an $n_2\geq n_0$ such that for any $\bs{s}_n \in \cG_{n,n_0}(\beta) \cap \cB_{n,n_0}$, both $(n \log 9) / B_n + \log  C^*(\delta,\epsilon) + \left(1 + \frac{2}{C^*9^n}\right)/B_n\leq 2^{-\alpha' n}$  and $\log(1/\delta)/B_n \leq 2^{-\alpha' n}$ for any $\alpha' < \log1.5/2$ and $n\geq n_2$.
	
		Define $R_n \triangleq A_n/B_n$. Again, from the upper bound in \lemref{lem:upper_bnd} one observes that $M_n \leq 2(4 Z_n)^{3/2}$. Thus $R_n = \frac{\log M_n}{\log Z_n} \geq \frac 3 2 + \frac{4}{\log Z_n} \geq \frac 3 2 + \frac{4}{\log \delta} \geq 1 + \delta(2+\delta)e^\delta$ for all $\bs{s}_n \in \cB_{n,n_0}(\delta)$ and for the previously chosen $\delta$.
	
	Referring to \eqref{eq:abupdates}, we have the following upper bound for $r^+$.
	\begin{equation*}
	r^+ \leq \frac{-1 /b + 1 + r}{\left(r - n\log 9/b - \log C^*/b-\left(1 + \frac{2}{C^*9^n}\right)/b\right)\!\vee \log(1/\delta)/b}
	\end{equation*}
	For $n \geq  n_3 \triangleq n_1 \vee n_2$ and same kind of $\bs{s}_n$, we know $r >1$, $n\log 9/b + \log C^*/b+\left(1 + \frac{2}{C^*9^n}\right)/b \leq 2^{-\alpha' n}$ and $\log(1/\delta)/b \leq 2^{-\alpha' n}$. Hence, the upper bound becomes
	$$
	r^+ \leq \frac{1 + r-2^{-\alpha' n}}{r-2^{-\alpha' n}}\text.
	$$
	In similar manner, iterations for $R_n$ are bounded as
	\begin{equation*}
	\begin{split}
	\frac{1 + r-2^{-\alpha' n}}{r} \leq r^+ \leq \frac{1 + r-2^{-\alpha' n}}{r-2^{-\alpha' n}}
	\end{split}
	\end{equation*}
	\centerline{and}
	$$
	r-2^{-\alpha' n}  \leq r^- \leq \frac{r}{1-2^{-\alpha' n}}\text.
	$$
	 From these, one concludes that
	\begin{equation}\label{eq:R_Plus_iteration}
	\left|R_{n+1}^+ - \left(\frac{R_n+1}{R_n}\right)\right|\leq 2^{-\alpha' n+1}
	\end{equation}
		\centerline{and}
		\begin{equation}\label{eq:R_M_iteration}
		R_n-2^{-\alpha' n+1} \leq R_{n+1}^-\leq R_n(1+2^{-\alpha' n+1}).
		\end{equation}
		Now, choose an $n_4$ such that $n_4 \triangleq \left\lceil \frac 1 {\alpha'} \log\left(\frac{2(2+\delta)e^\delta}{\delta(1-2^{-\alpha'})}\right)\right\rceil \vee n_3$. Define $\sigma_n \triangleq 2\sum_{k = n_4}^n2^{-\alpha' k}$ and observe $\sigma_n \leq  2\sum_{k = n_4}^\infty2^{-\alpha' k} \leq \delta$ for $n\geq n_4$. Since $R_n > 1$, the $+$ iteration ensures that $R_{n+1}^+ \leq 2+2^{-\alpha'n+1} \leq 2 + \delta$. Note that after exposed to $+$ transformation once, even infinitely many $-$ transformations cannot force $R_n$ to grow unboundedly as 
		$$
		R_\infty^{(-^\infty)} \leq (2+\delta)\prod_{k=n_4}^{\infty}(1+2^{-\alpha' k+1}) \leq(2+\delta)e^{\delta}.
		$$
	This shows that $R_n$ is bounded with probability close to $\gamma$. Using the upper bound found above, we obtain
	\begin{equation}\label{eq:R_Minus_iteration}
	|R_{n+1}^- - R_n| \leq  (2+\delta)e^\delta2^{-\alpha' n+1}\text.
	\end{equation}
 Define another process $X_n$ such that $X_{n_4} = R_{n_4}$ and 
	$$
	X_{n+1}^+ = \frac{X_n+1}{X_n},\quad X_{n+1}^- = X_n,\quad n\geq n_4\text.
	$$
 Using all these facts, we can also show that 
	
	\begin{equation}\label{ineq:sigma}
	|R_n - X_n| \leq (2+\delta)e^\delta\sigma_{n-1},\quad n \geq n_4\text.
	\end{equation}
	
	This follows by induction. The base case is easily proven from inequalities \eqref{eq:R_Plus_iteration} and \eqref{eq:R_Minus_iteration}. We now verify the other cases. Assuming the induction hypothesis we have $|R_n-X_n| \leq (2+\delta)e^\delta\sigma_{n-1}$.\\
	
	\noindent For $-$ iteration, we have
	$$
	|R^-_{n+1} - X^-_{n+1}| \leq |R_{n} - X_{n}| + (2+\delta)e^\delta2^{-\alpha' n+1} = (2+\delta)e^\delta\sigma_n\text.
	$$
	For $+$ iteration, we have
	\begin{align*}
	|R^+_{n+1} - X^+_{n+1}| &= \left|R^+_{n+1} - \frac{X_n+1}{X_n}\right|\\
	&\leq \left|\frac{R_n+1}{R_n} - \frac{X_n+1}{X_n}\right| + 2^{-\alpha' n+1}\text.
	\end{align*}
	We have assumed that $|R_n-X_n| \leq (2+\delta)e^\delta\sigma_{n-1}$. Note that since for all $n\geq n_4$, $\sigma_n \leq \delta$, this also implies $|R_n-X_n| \leq (2+\delta)e^\delta\delta$. Recall that $R_n \geq 1+\delta(2+\delta)e^\delta$ for all $\bs{s}_n \in \cB_{n,n_0}(\delta)$ and therefore $X_n \geq 1$ for such $\bs{s}_n$. The magnitude of derivative of $\frac{x+1}{x}$  is bounded by 1 on $[1,\infty)$. Hence,
	$$
	\left|\frac{R_n+1}{R_n} - \frac{X_n+1}{X_n}\right| \leq |R_n-X_n| \leq (2+\delta)e^\delta\sigma_{n-1}
	$$
	\centerline{and}
	$$
	|R^+_{n+1} - X^+_{n+1}| \leq (2+\delta)e^\delta\sigma_{n-1} + 2^{-\alpha' n+1} \leq (2+\delta)e^\delta\sigma_n\text.
	$$
	Therefore, we have proved the inequality \eqref{ineq:sigma} for all $n \geq n_4$. As we also have $(2+\delta)e^\delta\sigma_n \leq \delta$, we deduce
	
	\begin{equation}\label{eq:small1}
	|R_n - X_n| \leq  \delta,\quad n \geq n_4\text.
	\end{equation}
	
	Finally, we know that for any $\bs{s}_n \in \cG_{n,n_4}(\beta)$ and sufficiently large $n$, there will be arbitrarily large number of $+$ operations with high probability, say $\epsilon/3$. Since $|X_{n+1}^+ - \phi| = \left|\frac{X_n+1}{X_n}-\frac{\phi+1}{\phi}\right| = \left|\frac{X_n-\phi}{X_n\phi}\right| < \frac 1 \phi |X_n -\phi|$, $X_n$ converges to $\phi$. This shows the existence of an $n_5\geq n_4$ such that 
	\begin{equation}\label{eq:small2}
	|X_n - \phi| < \epsilon_r - \delta\text{ and } \bP(\cG_{n_5,n_4}(\beta)) \geq 1-\epsilon/3\text, \quad n \geq n_5\text.
	\end{equation}
	\eqref{eq:small1} and \eqref{eq:small2} imply $|R_n - \phi| \leq \epsilon_r$ for $n\geq n_5$ and $\bs{s}_n \in \cG_{n_4,n_0}(\beta) \cap \cB_{n,n_0} \cap \cG_{n,n_4}(\beta) = \cB_{n,n_0} \cap \cG_{n,n_0}(\beta) $ where 
	$$
	\bP(\cB_{n,n_0} \cap \cG_{n,n_0}(\beta)) \geq \gamma -\epsilon/3 -\epsilon/3 - \epsilon/3 = \gamma -\epsilon, \quad n \geq n_5\text.
	$$

\subsection{Proof of Theorem \ref{thm:rate}}\label{pf:rate}

We continue from the proof of Lemma \ref{lem:processR}. For all $\bs{s}_n \in \cB_{n,n_0} \cap \cG_{n,n_0}(\beta)$, $n\geq n_5$; we have $Z_n^{\phi+\epsilon_r} \leq M_n \leq Z_n^{\phi-\epsilon_r}$. Therefore, one obtains the following upper and lower bounds for iterations of $Z_n$.
\begin{align}\label{eq:phi}
\frac 1 3 z^{\phi+\epsilon_r} \leq z^+ \leq 3z^{\phi-\epsilon_r},\quad \frac 1 2z \leq z^- \leq 2z\text.
\end{align}
Let $\bar \beta \triangleq 1-\beta$. For a sufficiently large $n_6$, $\cG_{n_6,n_5}(\beta) \cap \cG_{n_6,n_5}(\bar\beta)^C$ occurs with high probability. For once again, the same machinery in \cite{Telatar_Rate} is used to obtain
\begin{equation}\label{eq:phi2}
2^{-2^{n\beta(\log\phi+\delta'')}} \leq Z_n \leq 2^{-2^{n\beta(\log\phi-\delta')}}
\end{equation}
for $n > n_6$, and any $\delta',\delta'' > 0$ which proves the first part of the theorem.

For the second part, the upper and lower bounds on iterations of $M_n$ are given by
\begin{align*}
\frac 1 3 m^{1+1/(\phi - \epsilon_r)}  \leq m_+ \leq 3m^{1+1/(\phi + \epsilon_r)},\quad \frac 1 2 m \leq m_- \leq 2m\text.
\end{align*}

Observe that $1+\frac{1}{\phi + \epsilon_r} \geq \phi - \epsilon_r$ and $1+\frac{1}{\phi - \epsilon_r} \leq \phi +\epsilon_r$ for small $\epsilon_r$.
Now, the same argument that we used to show \eqref{eq:phi2} from \eqref{eq:phi} allows us to conclude 

\begin{equation*}
2^{-2^{n\beta(\log\phi+\delta'')}} \leq M_n \leq 2^{-2^{n\beta(\log\phi-\delta')}}
\end{equation*}
from the bounds on $m^+$ and $m^-$.
\end{document}